\documentclass[aps, pra, superscriptaddress, reprint, onecolumn, notitlepage]{revtex4-1}
\usepackage[cm]{fullpage}
\usepackage[T1]{fontenc}
\usepackage{amssymb, amsmath, amsthm}
\makeatletter
\def\amsbb{\use@mathgroup \M@U \symAMSb}
\makeatother
\usepackage[mathscr]{euscript}
\usepackage{mathtools}
\usepackage{xcolor}
\usepackage[colorlinks=true, linkcolor=darkred, urlcolor=darkblue, citecolor=darkgreen]{hyperref}
\usepackage{verbatim}
\usepackage{bbold}
\usepackage{enumitem}
\usepackage{framed}
\newtheorem{df}{Definition}
\newtheorem{lem}{Lemma}

\newtheorem{prop}{Proposition}

\definecolor{darkred}{RGB}{200, 0, 0}
\definecolor{darkblue}{RGB}{0, 0, 180}
\definecolor{darkgreen}{RGB}{50, 150, 0}
\definecolor{col2}{HTML}{64A857}
\definecolor{col3}{HTML}{D1603D}

\DeclareMathOperator{\tr}{tr}

\DeclareMathOperator{\id}{id}

\DeclareMathOperator{\hmin}{H_{min}}
\DeclareMathOperator{\hmineps}{H_{min}^{\varepsilon}}
\DeclareMathOperator{\Psucc}{P_{succ}^{\cF}}

\DeclareMathOperator{\pwin}{p_{win}}
\DeclareMathOperator{\pguess}{p_{guess}}
\DeclareMathOperator{\pguessseq}{p_{guess}^{seq}}
\DeclareMathOperator*{\argmax}{arg\,max}

\newcommand{\fchsh}{f_{\textnormal{CHSH}}}
\def \diracspacing {0.7pt}
\newcommand{\ave}[1]{\langle #1 \rangle}
 
\newcommand{\ket}[1]{| \hspace{\diracspacing} #1 \rangle}

\newcommand{\ketbra}[2]{| \hspace{\diracspacing} #1 \rangle \langle #2 \hspace{\diracspacing} |} 
 
\newcommand{\ketbraq}[1]{\ketbra{#1}{#1}}

\newcommand{\nbox}[2][9]{\hspace{#1pt} \mbox{#2} \hspace{#1pt}}
\newcommand{\norm}[2][]{#1| \! #1| #2 #1| \! #1|}
\newcommand{\abs}[2][]{#1| #2 #1|}

\newcommand{\post}[0]{^{\postinside}}
\newcommand{\postinside}[0]{*}
\def\width{0.6}
\def\height{0.6}
\newcommand{\textsq}[3]
{
\node[align=center] at (#1, #2) {#3};
\draw [black] (#1 + \width, #2 + \height) rectangle (#1 - \width, #2 - \height);
}

\newcommand{\cB}{\mathcal{B}}

\newcommand{\cF}{\mathcal{F}}

\newcommand{\cH}{\mathcal{H}}
\newcommand{\cI}{\mathcal{I}}

\newcommand{\cK}{\mathcal{K}}
\newcommand{\cL}{\mathcal{L}}

\newcommand{\cP}{\mathcal{P}}

\newcommand{\cS}{\mathcal{S}}

\newcommand{\cX}{\mathcal{X}}
\newcommand{\cY}{\mathcal{Y}}

\newcommand{\sH}{\mathscr{H}}

\usepackage{tikz}
\usetikzlibrary{decorations.pathmorphing}
\tikzset{>=stealth}
\newcommand{\corefig}
{
	\draw[rounded corners, darkgreen, dashed, ultra thick] (-4.2, 4.6) -- (4.7, 4.6) -- (4.7, -2) -- (-4.2, -2) -- (-4.2, 4.6);
	\node[align=center] at (0, 4.1) {\textbf{Alice}};
	\draw [->] (-2.35, 0) to (-0.95, 0);
	\textsq{-3.1}{0}{main\\device};
	\draw [->] (-3.1, 1.2) to (-3.1, 0.7);
	\node[align=center] at (-3.1, 1.45) {$\theta$};
	\draw [->] (-3.1, -0.7) to (-3.1, -1.2);
	\node[align=center] at (-3.1, -1.4) {$x$};
	\draw (0, 0) ellipse (0.8 and 0.5);
	\node[align=center] at (0, 0) {switch};
	\draw [->, in=180, out=0] (0.95, 0) to (2.85, 2.5);
	\node[align=center] at (1.65, 1.25) {$q$};
	\textsq{3.6}{2.5}{test\\device};
	\draw [->] (3.6, 3.7) to (3.6, 3.2);
	\node[align=center] at (3.6, 3.95) {$t$};
	\draw [->] (3.6, 1.8) to (3.6, 1.3);
	\node[align=center] at (3.6, 1.05) {$y$};
	\draw [->] (0.95, 0) to (6.85, 0);
	\node[align=center] at (3, -0.3) {$(1 - q)$};
	\textsq{7.6}{0}{\textbf{Bob}};
	\draw [->] (7.6, -0.7) to (7.6, -1.2);
}
\newcounter{prots}
\newenvironment{prot}[1]
{
\refstepcounter{prots}
\begin{framed}
\noindent \textbf{Protocol \arabic{prots}:}\ {\texttt{#1}}\\
}
{\end{framed}}

\begin{document}
\title{Device-independent two-party cryptography secure against sequential attacks}
\author{J\k{e}drzej Kaniewski}
\email{jkaniewski@math.ku.dk}
\affiliation{Centre for Quantum Technologies, National University of Singapore, 3 Science Drive 2, Singapore 117543}
\affiliation{QuTech, Delft University of Technology, Lorentzweg 1, 2628 CJ Delft, the Netherlands}
\affiliation{Department of Mathematical Sciences, University of Copenhagen, Universitetsparken 5, 2100 Copenhagen, Denmark}
\author{Stephanie Wehner}
\affiliation{QuTech, Delft University of Technology, Lorentzweg 1, 2628 CJ Delft, the Netherlands}
\date{\today}
\begin{abstract}
The goal of two-party cryptography is to enable two parties, Alice and Bob, to solve common tasks without the need for mutual trust. Examples of such tasks are private access to a database, and secure identification. Quantum communication enables security for all of these problems in the noisy-storage model by sending more signals than the adversary can store in a certain time frame. Here, we initiate the study of device-independent protocols for two-party cryptography in the noisy-storage model. Specifically, we present a relatively easy to implement protocol for a cryptographic building block known as weak string erasure and prove its security even if the devices used in the protocol are prepared by the dishonest party. Device-independent two-party cryptography is made challenging by the fact that Alice and Bob do not trust each other, which requires new techniques to establish security. We fully analyse the case of memoryless devices (for which sequential attacks are optimal) and the case of sequential attacks for arbitrary devices. The key ingredient of the proof, which might be of independent interest, is an explicit (and tight) relation between the violation of the Clauser-Horne-Shimony-Holt inequality observed by Alice and Bob and uncertainty generated by Alice against Bob who is forced to measure his system before finding out Alice's setting (guessing with postmeasurement information). In particular, we show that security is possible for arbitrarily small violation.
\end{abstract}
\pacs{03.67.-a}
\maketitle
\section{Introduction}
\label{sec:introduction}
Quantum key distribution~\cite{bennett84,ekert91} (QKD) allows two honest parties, Alice and Bob, to protect their communication from a nosy eavesdropper. Yet, there are many other tasks that Alice and Bob may wish to solve, in which they themselves do not trust each other and secure identification is one such example. Here, Alice wants to identify herself to Bob without revealing her password. Bit commitment and oblivious transfer constitute other well-known examples of such tasks. 

It is intuitive that security for two-party cryptographic protocols is more difficult to achieve than for QKD, since Alice and Bob cannot help each other to check on the eavesdropper. Instead, every party has to fend for himself. It turns out that even using quantum communication Alice and Bob cannot achieve security without making additional assumptions~\cite{mayers97, lo97, lo97a, colbeck07a}. Usually one relies on computational assumptions, i.e.~that solving a computational puzzle requires a large amount of computing resources, namely more than is available to the adversary. Instead of relying on computational assumptions, however, it is possible to make physically motivated assumptions, for example that the adversary's ability to store information is limited. Introducing such storage restrictions was pioneered by Maurer~\cite{maurer91}, who considered imposing a restriction on the adversary's ability to store \emph{classical} bits known as the bounded-storage model. Unfortunately, the fact that (i) classical storage is cheap and plentiful and (ii) the gap between what the honest parties need to implement the protocol and what a dishonest party needs to break it is only polynomial~\cite{cachin97}, renders this model less practical. In contrast, storing quantum information reliably is an extremely difficult problem, motivating the so-called bounded-quantum storage~\cite{damgard05,damgard07} or more generally
noisy-storage model~\cite{wehner08a,konig12}. The noisy-storage model admits protocols that require no quantum storage for the honest execution and that can be implemented in a manner similar to QKD using BB84~\cite{wehner08a,konig12,dupuis15}, six-state~\cite{berta14} or continuous variable~\cite{furrer15} encodings. Significantly, security can always 
be achieved as long as the number of qubits $n$ sent in the protocol is only slightly larger than the number of qubits $r$ that the adversary can store, that is, whenever $r \lesssim n - O(\log n)$~\cite{dupuis15}, which is essentially optimal. First implementations of bit commitment~\cite{ng12a} and oblivious transfer~\cite{erven14} in the noisy-storage model have been demonstrated. Note that there exist other assumptions that make two-party cryptography possible, e.g.~that the two parties are given access to guaranteed additional resources~\cite{rivest99, crepeau97, winter03}, or that they must delegate agents who cannot communicate during the protocol (which might be motivated by special relativity)~\cite{benor88, kent99, kent05, simard07, crepeau11, kent11, kent12, kaniewski13, kaniewski15a}. The noisy-storage model is particularly interesting since in contrast to computational or relativistic assumptions, security is preserved even if the assumption is invalidated at a later point. That is, security cannot be broken retroactively if the adversary acquires a larger quantum storage device in the future, making this assumption completely future-proof.

One of the central questions in (quantum) cryptography is finding the minimal assumptions which are sufficient to guarantee security. For example in the standard QKD scenario we assume that the quantum channel between Alice and Bob is untrusted (i.e.~it is fully controlled by the eavesdropper) but the devices used by Alice and Bob inside their laboratories are fully characterised. Already early on, however, it was recognized that violation of a Bell inequality is intimately linked to cryptographic security~\cite{ekert91}. Mayers and Yao~\cite{mayers98, mayers04} went on to realise that quantum states can be \emph{self-tested}, i.e.~that certain quantum properties can be verified by a purely classical user, which started the field of \emph{device-independent} (DI) quantum cryptography. In DI cryptography instead of assuming that we know how the devices work, we simply test them during the protocol by using them to exhibit Bell nonlocality~\cite{brunner14}. DI cryptography has been one of the most active research topics within quantum cryptography, predominantly in the context of QKD~\cite{barrett05a, acin06, acin07, barrett13, reichardt13, vazirani14, miller14, miller14a, aguilar15} and randomness expansion or amplification~\cite{colbeck10, pironio10, vazirani12, coudron13, bouda14, miller14, miller14a}.

DI two-party cryptography, on the other hand, remains a largely unexplored territory. Security of a protocol for imperfect coin flipping and bit commitment has been analysed in the DI regime~\cite{silman11, aharon16}. Significantly, the setting considered by these works is different: since the authors do not impose any extra assumptions, they cannot hope to reach the perfect primitive so they aim for an imperfect implementation instead. Moreover, Adlam and Kent have recently proposed a DI relativistic bit commitment protocol~\cite{adlam15b}, which
allows security for a fixed amount of time under the assumption that each party is split into space-like separated agents.

Here, we take the very first step in proving DI security for two-party cryptographic protocols in the noisy-storage model. That is, we establish the security of these protocols even if the devices are not trusted under some extra assumptions (either we require the devices to behave identically in every round or we require the attack of the dishonest party to be sequential). To accomplish this, there are a number of conceptual as well as technical hurdles to cross.

\begin{enumerate}
\item In QKD Alice and Bob are always honest, while Eve is always trying to break the protocol. In DI QKD it is therefore natural
to give the power to prepare the devices to Eve. Analogously, we will assume here that all the devices used in the protocol are
always prepared by the dishonest party.
\item In the following section we will see that the protocol we start with uses quantum communication between Alice and Bob. This means that
the adversary who prepared the devices will receive \emph{quantum communication} coming back from the devices. This is in sharp contrast
to DI QKD, in which Eve prepares the devices -- with which she is possibly entangled -- and then Alice and Bob simply 
push buttons on the devices to perform measurements. That is, there is no quantum communication going back to Eve. This feature introduces a significant difference between the security analysis of DI QKD and DI two-party cryptography protocol considered here and requires us to develop novel proof techniques.
\end{enumerate}
\subsection{Results}
To establish DI security of two-party protocols, we will establish the DI security for a \emph{universal} two-party primitive known as weak string erasure (WSE)~\cite{konig12}. The most convenient manner of describing a new primitive is to specify its input-output behaviour. Such an abstract description is known as the \emph{ideal functionality} and the ideal functionality of WSE is explained in Fig.~\ref{fig:wse}. Universality means that a secure implementation of WSE can be used to construct \emph{any} other two-party cryptographic primitive. In particular, the well-known primitive of bit commitment can be obtained from WSE using classical post-processing. Since classical post-processing is trusted in the model of DI quantum cryptography, this means that once we construct a DI protocol for WSE, we have obtained a protocol for any primitive that can be obtained from WSE using classical post-processing. Moreover, the final security bound~\eqref{eq:security-memoryless} immediately implies the device-independent security of an oblivious transfer protocol in the bounded storage model (for details see Section~4.3 of Ref.~\cite{damgard07}).
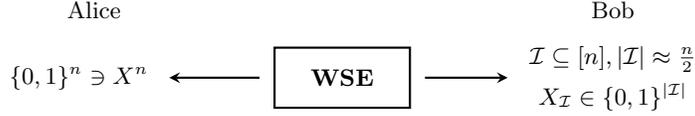
\begin{figure}[h!]
\begin{tikzpicture}[scale=1, line width=0.8]
\node at (-3.3, 0.9) {Alice};
\node at (-3.5, 0) {$\{0, 1\}^{n} \ni X^{n}$};
\draw [<-] (-2.3, 0) to (-1.1, 0);
\draw [black] (-0.9, -0.4) rectangle (0.9, 0.4);
\draw [->] (1.1, 0) to (2.2, 0);
\node at (0, 0) {\textbf{WSE}};
\node at (3.6, 0.9) {Bob};
\node at (3.6, 0.25) {$\cI \subseteq [n], \abs{\cI} \approx \frac{n}{2}$};
\node at (3.6, -0.25) {$X_{\cI} \in \{0, 1\}^{\abs{\cI}}$};
\end{tikzpicture}
\caption{The ideal functionality of WSE~\cite{konig12}: Alice gets a randomly chosen bit string $X^n$ while Bob 
obtains a randomly chosen subset of indices $\cI \subseteq [n] = \{1, 2, \ldots, n\}$ and the bits of $X^n$ corresponding to the indices in $\cI$, denoted by $X_{\cI}$. Security means that if Bob is honest, then Alice cannot learn the index set $\cI$. That is,
she does not learn which bits of the string $X^n$ are known to Bob. Conversely, if Alice is honest, then Bob finds it difficult to guess the \emph{entire} string quantified by a lower bound on the min-entropy $\hmin(X^n|\textnormal{Bob}) \geq \lambda n$ (equivalent to an upper bound on the guessing probability $\pguess(X^n|\textnormal{Bob}) \leq 2^{- \lambda n}$), where $\lambda$ is a real parameter specified by the ideal functionality. Whenever $\lambda > 0$, WSE is useful for constructing
other cryptographic primitives like bit commitment. We defer formal definitions until Section~\ref{sec:security-definitions}.
}
\label{fig:wse}
\end{figure}
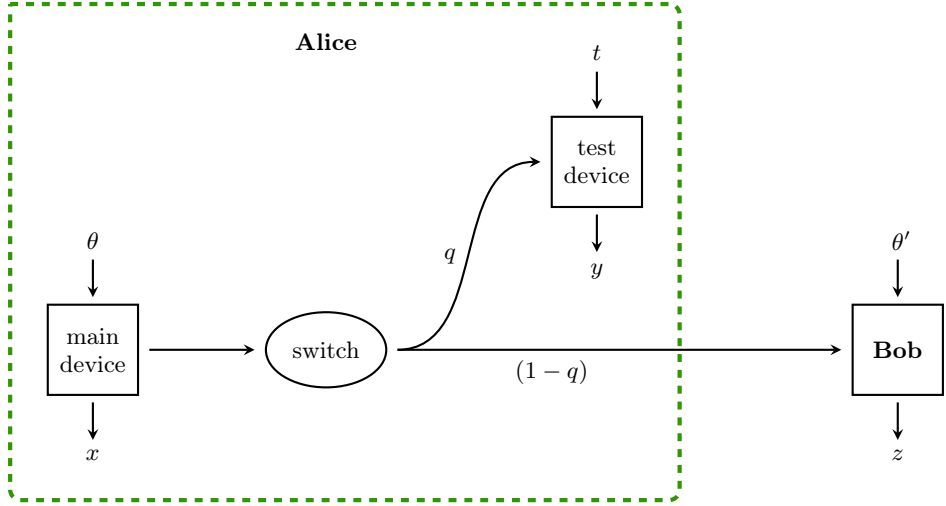
\begin{figure}[h!]
\begin{tikzpicture}[scale=1, line width=0.8]
	\corefig
	\draw [->] (7.6, 1.2) to (7.6, 0.7);
	\node[align=center] at (7.65, 1.45) {$\theta'$};
	\node[align=center] at (7.6, -1.4) {$z$};
\end{tikzpicture}
\caption{Honest execution of the DI WSE protocol. The main device prepares an EPR pair $\ket{\Psi_{AB}}$, measures the $A$ system in either the computational ($\theta = 0$) or Hadamard ($\theta = 1$) basis (chosen uniformly at random) to produce $x \in \{0, 1\}$, while the $B$ system is sent to the switch. Now, Alice chooses to either perform a test or play a live round. Whenever she decides to execute a test (with probability $q$), the switch directs $B$ to the test device, and she performs a CHSH test between the main device and the test device. That is, she chooses a random input $t \in \{0, 1\}$ and checks the CHSH condition $x \oplus y = \theta \cdot t$ on the outputs $x, y \in\{0, 1\}$. Whenever, she decides to play a live round (with probability $1-q$) she uses the switch to send $B$ to Bob, who measures the incoming qubit in either the computational ($\theta' = 0$) or Hadamard ($\theta' = 1$) basis (chosen uniformly at random) to produce $z \in \{0,1\}$, respectively. After $n$ live rounds, both parties wait time $\Delta t$, which enforces the storage assumption, after which Alice announces her basis string $\theta^{n} = \theta_{1} \theta_{2} \ldots \theta_{n}$. At the end Alice holds a random string $x^{n} = x_{1} x_{2} \ldots x_{n}$, while Bob has an index set $\mathcal{I} = \{j \in [n] : \theta_{j} = \theta'_{j}\}$ and a substring $x_{\mathcal{\cI}} := ( x_{j} )_{j \in \cI}$.}
\label{fig:honest} 
\end{figure}

We propose a DI protocol for WSE whose security is certified by the violation of the Clauser-Horne-Shimony-Holt (CHSH)~\cite{clauser69} inequality (see Section~\ref{sec:chsh} for details). We make the assumption that it is always the dishonest party that produced the devices. However, we will argue that dishonest Alice cannot gain any advantage by preparing Bob's devices so only the case of dishonest Bob requires detailed analysis. Before the protocol begins Bob provides Alice with two separate devices: a source of bipartite quantum states, combined with a measurement devices, plus one additional 
measurement devices that Alice can use for testing (see Fig.~\ref{fig:honest}). According to the ideal specification this setup should be capable of producing the maximal violation of the CHSH inequality. In the protocol, Alice will use a switch to either send a quantum state to the test device or to Bob. That is, she sometimes uses her devices to violate the CHSH inequality (the test rounds) while sometimes she only measures one of the particles and passes the other one to Bob (the live rounds). Intuitively, observing a high CHSH violation in the test rounds implies that measurements performed by the devices are incompatible, which leads to uncertainty (against a classical adversary) in the live rounds. For completeness, let us stress the importance of the assumption that Alice has full control over the switch, i.e.~she is free to choose which rounds are used for testing and which rounds are used in the protocol (sometimes referred to as the \emph{free will} assumption). This assumption is crucial from the theoretical point (it implies that the sample used to assess the performance of the devices cannot be influenced by the dishonest party, which is important since in many cases even limited influence may completely break the security), but it is also reasonable from a practical point of view (a switch is a simple enough device to be prepared by Alice herself).

In the dishonest scenario we allow Bob to prepare all the devices and in addition he receives quantum communication from Alice during the protocol as depicted in Fig.~\ref{fig:dishonest-Bob}. Here, we analyse two distinct security models.

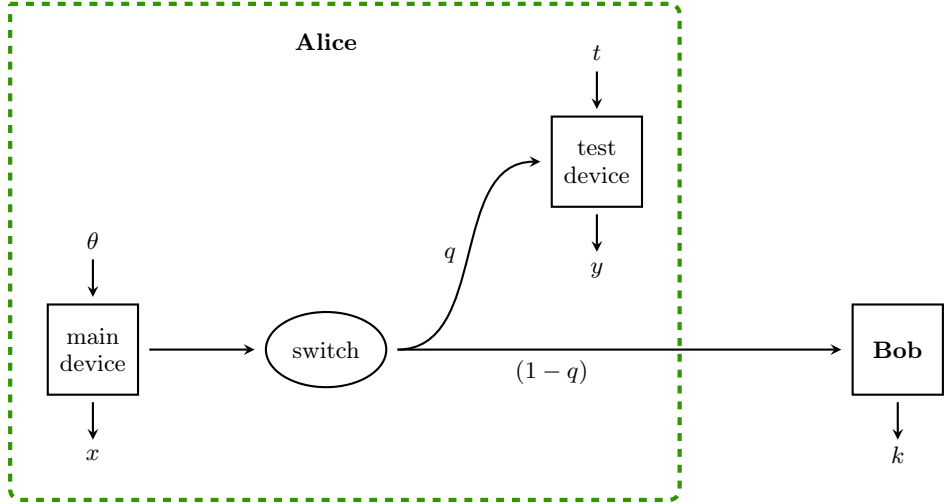
\begin{figure}[h!]
\begin{tikzpicture}[scale=1, line width=0.8]
	\corefig
	\node[align=center] at (7.6, -1.4) {$k$};
\end{tikzpicture}
\caption{Dishonest Bob prepared all the devices. This means that the state generated by the source can be chosen arbitrarily by Bob, and similarly he can adjust the measurements performed by the main and test device. Alice has control of the correctly functioning switch to decide whether she wants to test or perform a live round. Honest Alice proceeds as before, however, Bob is not restricted to performing BB84 measurements on the returning quantum states. In sharp contrast to DI QKD, the dishonest party thus receives quantum communication coming back from the devices, which calls for new techniques. As we will show below, it will be enough to consider the case where Bob measures the resulting quantum states to obtain some classical information. We will then establish a bound on the min-entropy that Bob has about the string $x^n$, given this classical 
information $k$ and the basis information received later. Using standard methods~\cite{konig12}, we can then turn this into a security statement in the noisy-storage model. 
}
\label{fig:dishonest-Bob}
\end{figure}
\begin{itemize}
\item Memoryless devices (against an arbitrary attack)\\
We call a device memoryless if its behaviour is identical every time it is used and there are no correlations between different uses. This is a convenient assumption because for such devices the observed CHSH violation $\beta$ is a well-defined quantity and can be estimated to arbitrary precision.
As explained in Fig.~\ref{fig:wse} the goal of WSE is to generate a string $X^{n}$ that Bob is at least partially ignorant about as quantified by the min-entropy $\hmin(X^{n} | \textnormal{Bob})$. In case of Bob whose quantum storage is restricted to be of dimension at most $d$ we show that
\begin{equation}
\label{eq:security-memoryless}
\hmin(X^{n} | \textnormal{Bob}) \geq n f(\beta) - \log d
\end{equation}
or equivalently
\begin{equation*}
\pguess(X^{n} | \textnormal{Bob}) \leq d \cdot 2^{- n f(\beta)},
\end{equation*}
where $f(\beta)$ is a simple function plotted in Fig.~\ref{fig:memoryless-bound} and $\log \equiv \log_{2}$. Thus, to achieve security against such an adversary it suffices to choose $n$ large enough to guarantee $n f(\beta) - \log d > 0$. For adversaries whose quantum storage is noisy rather than bounded the analysis is slightly more involved and can be found in Section~\ref{sec:memoryless} (explicit security bound in Proposition~\ref{prop:noisy}). In either case positive min-entropy rate implies that the protocol can be used for constructing more complicated primitives like bit commitment or oblivious transfer.
\begin{figure}
	\centering
	\includegraphics[scale=1]{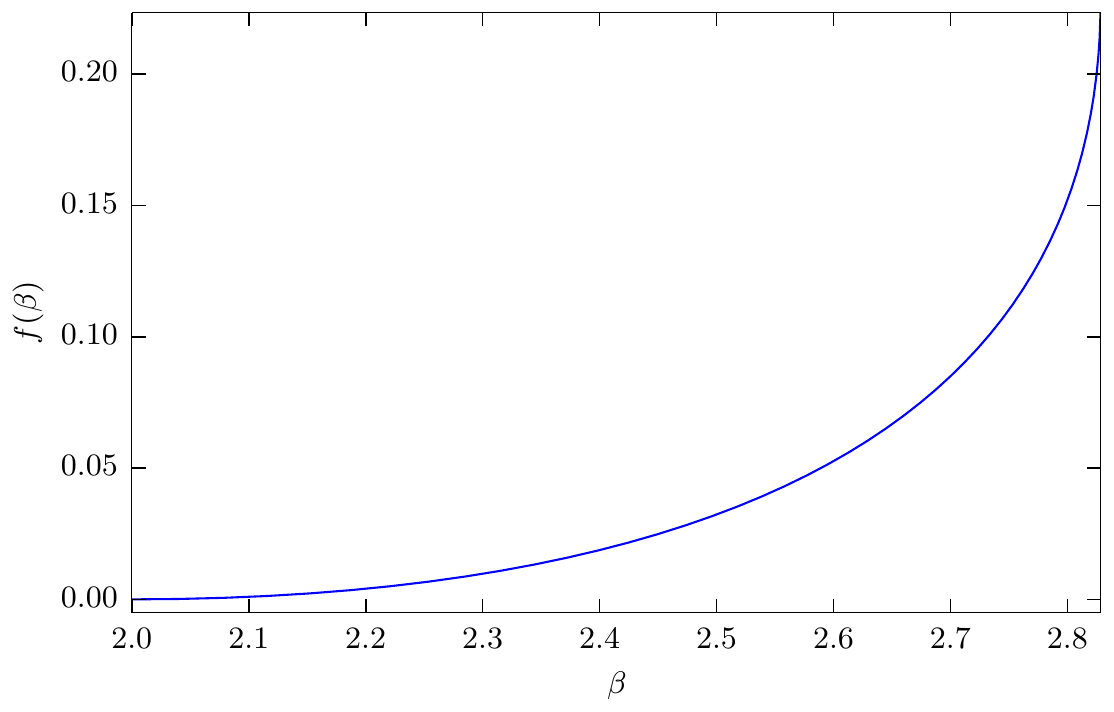}
	\caption{Lower bound on the min-entropy rate $f(\beta)$ as a function of the CHSH violation $\beta$. Crucially, we have
$f(\beta) > 0$, whenever $\beta > 2$. This means that security can be achieved for arbitrarily small violation of the CHSH inequality.}
	\label{fig:memoryless-bound}
\end{figure}
\item General devices against a sequential attack\\
In case of devices with memory (whose behaviour may change during the protocol and in particular there might be correlations between different rounds) the analysis is more involved both from the conceptual and technical point of view.
First, we must realise that we cannot in advance test the devices (to estimate their quality) and use the results to make a security statement simply because the behaviour of the devices might change in time. In particular, it is clear that the devices must not know whether they are currently  being tested or not. Therefore, the test rounds and the live rounds must be interspersed and we can only make a security statement about the combined performance.

In this case the test rounds must be explicitly included in the protocol and we adapt the simplest solution in which before every round Alice flips a biased coin and either plays a test round (with probability $q$) or a live round (with probability $1 - q$). After $n$ rounds she computes the fraction of successful CHSH rounds $\fchsh$ and checks whether it exceeds some previously chosen threshold 
$\gamma$. Note that estimating $\fchsh$ plays the role of estimating $\beta$ in the memoryless scenario: once the devices are allowed to have memory and change behaviour from round to round, $\beta$ is no longer a well-defined quantity and $\fchsh$ is the best approximation thereof. If $\fchsh \geq \gamma$ she declares the protocol to have terminated successfully, otherwise she aborts. Intuitively, what we want to avoid is the situation in which Alice believes that the protocol has terminated correctly but nevertheless Bob actually knows the entire string $x^{n}$ and we denote such an event by $F$ (failure). Suppose $n$ rounds are executed with parameters $q \in [0, 1]$ and $\gamma \in [\frac{3}{4}, 1]$. 
We call an attack sequential if after every round Bob is required to produce a classical outcome and his guess for that round is required to be a (classical) post-processing of that outcome combined with the basis information and any information from the previous rounds (see Section~\ref{sec:general} for a more detailed explanation). It is worth noting that this assumption removes the need to restrict Bob's storage capabilities: since he is forced to commit to his guess immediately after the round is over, storing the quantum system does not help). We show that in the sequential scenario the probability of failure is bounded by
\begin{equation}
\label{eq:security-general}
\Pr[ F ] \leq [ \alpha_{\textnormal{min}}(q, \gamma) ]^{n},
\end{equation}
where $\alpha_{\textnormal{min}}(q, \gamma)$ can be easily calculated for any (valid) choice of $q$ and $\gamma$ (cf.~Fig.~\ref{fig:alphamin}). Alternatively, we can write $\Pr[F]$ in terms of the probability of passing the test $p_{\textnormal{pass}}$ and the probability of successfully guessing the entire ``live'' string (restricted to sequential guessing strategies, see Section~\ref{sec:guessing} for a precise definition) conditioned on passing the test $\pguessseq( X^{\cL} | \textnormal{Bob}, \textnormal{pass} )$
\begin{equation}
\label{eq:security-general-2}
\Pr[ F ] = p_{\textnormal{pass}} \cdot \pguessseq( X^{\cL} | \textnormal{Bob}, \textnormal{pass} ) \leq [ \alpha_{\textnormal{min}}(q, \gamma) ]^{n}.
\end{equation}
Our analysis is tight in the sense that it identifies correctly the pairs $(q, \gamma)$ for which security is possible, i.e.~we show that $\alpha_{\textnormal{min}}(q, \gamma) < 1$ unless $q = 0$ (Alice never tests), $q = 1$ (Alice never plays a live round) or $\gamma = \frac{3}{4}$ (the threshold can be achieved by a classical strategy). This means that the probability of the devices performing well in the test rounds \emph{and} failing to implement a secure WSE decays exponentially in the total number of rounds. The technique we use to prove this result is generic and can be applied to any situation in which the combined performance of two (or more) games is assessed (as long as there is some non-trivial trade-off between them).
\begin{figure}
	\centering
	\includegraphics[scale = 1]{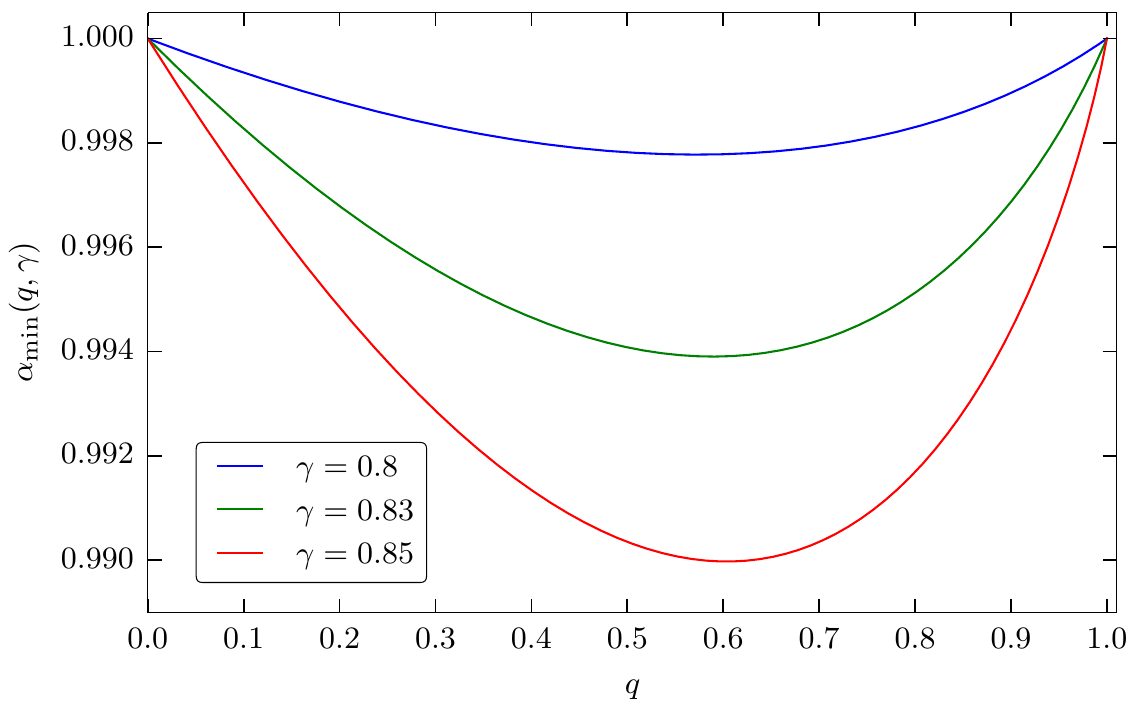}
	\caption{Values of the decay rate $\alpha_{\textnormal{min}}(q, \gamma)$ calculated numerically as a function of $q$ for various values of $\gamma$.}
	\label{fig:alphamin}
\end{figure}
\end{itemize}
These two contributions should be seen as steps towards a security proof against the most general attack. The memoryless model might be of independent interest since it captures the case of devices which are faulty rather than malicious (e.g.~due to some misalignment of optical components); such scenarios are usually modelled as permanent deviations from the ideal specification rather than time-dependent ones.
\section{Methods}
In Section~\ref{sec:wse-protocol} we present the original protocol for WSE using trusted devices, in Section~\ref{sec:preliminaries} we introduce the relevant quantities and prove some technical lemmas, in Section~\ref{sec:memoryless} we formalise the scenario of memoryless devices and prove security statement~\eqref{eq:security-memoryless} and in Section~\ref{sec:general} we analyse the case of arbitrary devices against sequential attacks and prove security claim~\eqref{eq:security-general}.

\subsection{The original WSE protocol for trusted devices}
\label{sec:wse-protocol}
To build intuition, let us first describe the original protocol for WSE~\cite{konig12}, which works under the assumption that the devices used by Alice and Bob are perfect and prepared in a trustworthy fashion. We sketch out a simple security argument and discuss how to make the protocol device-independent. Note that there exist more sophisticated arguments which give better security guarantees but they seem to be more difficult to adapt to the DI scenario.
\begin{prot}{WSE in the noisy-storage model}
\label{prot:original}
\begin{enumerate}[label=(\arabic*)]
\item Alice chooses two uniform $n$-bit strings $x^{n}, \theta^{n} \in \{0, 1\}^{n}$, generates the $n$-qubit state
\begin{equation*}
\bigotimes_{j = 1}^{n} H^{\theta_{j}} \ket{x_{j}},
\end{equation*}
where $H$ is the Hadamard gate, and sends it to Bob. (Note that this just a sequence of $n$ randomly chosen BB84~\cite{bennett84} states.)
\item Bob chooses a uniform $n$-bit string $\theta'^{n} \in \{0, 1\}^{n}$ and measures the $j$-th qubit in the computational (if $\theta_{j}' = 0$) or Hadamard (if $\theta'_{j} = 1$) basis.
\item Alice waits a fixed amount of time (to enforce the restriction on Bob's quantum memory) and then sends $\theta^{n}$ to Bob.
\item Bob determines the index set as
\begin{equation*}
\cI := \{ j \in [n] : \theta_{j} = \theta_{j}' \}
\end{equation*}
and obtains the corresponding substring $x_{\cI}$.
\end{enumerate}
\end{prot}
Correctness of this protocol is easy to verify because the string $x^{n}$ is chosen uniformly at random by Alice and with high probability Bob measures roughly half of the qubits in the correct basis. Security for honest Bob is a direct consequence of the fact that the index set is determined by the positions at which $\theta_{j} \oplus \theta_{j}' = 0$. Since $\theta'^{n}$ is chosen uniformly at random by Bob, every index set is equally likely (and Alice is fully ignorant about it). Therefore, the only non-trivial scenario is the case of honest Alice.

Let $\rho_{X^{n} \Theta^{n} B}$ be the state of the protocol after step (1), where $X^{n}$ and $\Theta^{n}$ are the classical random variables generated by Alice and $B$ is the quantum system received by Bob. The memory bound forces Bob to put the $B$ subsystem through a quantum channel which outputs a classical register $K$ and a quantum register $Q$, which gives rise to $\rho_{X^{n} \Theta^{n} K Q}$. Since $\Theta^{n}$ is eventually announced to Bob, our goal is to find a lower bound on $\hmin(X^{n} | K Q \Theta^{n} )$. In the bounded-storage model we can use the following chain rule
\begin{equation}
\label{eq:chain-rule}
\hmin(X^{n} | K Q \Theta^{n} ) \geq \hmin(X^{n} | K \Theta^{n} ) - \log \dim Q.
\end{equation}
In case of noisy storage the argument is slightly more involved (see Section~\ref{sec:memoryless} for details) but again the task reduces to establishing uncertainty against a classical adversary. This is possible because generating random BB84 states is equivalent to creating EPR pairs and measuring them in either computational or Hadamard basis and we know that outcomes of incompatible measurements cannot be predicted (perfectly) by a classical adversary. Indeed, it has been shown (Eq.~(18) in Ref.~\cite{konig12}) that the resulting conditional min-entropy satisfies $\hmin(X^{n} | K \Theta^{n} ) \geq \alpha n$ for
\begin{equation*}
\alpha = - \log \Big( \frac{1}{2} + \frac{1}{2 \sqrt{2}} \Big) \approx 0.22.
\end{equation*}
Note that this bound is tight and is achieved if Bob measures every received qubit in the intermediate basis $\{ \ket{\alpha_{0}}, \ket{\alpha_{1}} \}$, where
\begin{gather*}
\ket{\alpha_{0}} = \cos (\pi/8) \ket{0} + \sin (\pi/8) \ket{1},\\
\ket{\alpha_{1}} = \sin (\pi/8) \ket{0} - \cos (\pi/8) \ket{1}.
\end{gather*}

In case of trusted devices placing a lower bound on $\hmin(X^{n} | K \Theta^{n} )$ is possible because we know exactly the measurement operators on Alice's side. The main challenge in the DI scenario is to prove a lower bound which relies solely on properties that can be certified device-independently. Our approach follows the intuition that observing a Bell violation implies incompatibility of local observables which is sufficient to guarantee uncertainty. Previously, this approach has been used successfully in proving security of DI QKD~\cite{tomamichel13a, lim13}.

\subsection{Preliminaries}
\label{sec:preliminaries}
For an integer $n \in \amsbb{N}$ let $[n] := \{1, 2, \ldots, n\}$. Throughout this paper we assume that all random variables are discrete (they take a finite number of values) and that all quantum systems are finite-dimensional. Let $\sH$ be a (finite-dimensional) Hilbert space and let $\cL(\sH)$/$\cH(\sH)$ be the set of linear/Hermitian operators acting on $\sH$. The Schatten $\infty$-norm of an operator $X$ is denoted by $\norm{X}$. The square root of a positive semidefinite operator $X$, denoted by $\sqrt{X}$, is defined as the unique positive semidefinite operator $Y$ satisfying $Y^{2} = X$. The modulus of an operator $X$, denoted by $\abs{X}$, is defined as $Y = \sqrt{ X^{\dagger} X }$. It is easy to verify that for arbitrary operators $X$ and $Y$ we have
\begin{equation}
\label{eq:anticom-com-equality}
\abs{ X + Y }^{2} + \abs{ X - Y }^{2} = 2 ( X^{\dagger} X + Y^{\dagger} Y ).
\end{equation}
The commutator of $X$ and $Y$ is defined as $[X, Y] = XY - YX$, while the anticommutator is defined as $\{ X, Y \} = XY + YX$.

A quantum state $\rho$ is a Hermitian operator $\rho \in \cH(\sH)$ which is positive semidefinite ($\rho \geq 0$) and of unit trace ($\tr \rho = 1$). An observable is a Hermitian operator $A \in \cH(\sH)$ which satisfies $- \mathbb{1} \leq A \leq \mathbb{1}$ (or equivalently $\norm{A} \leq 1$). Plugging $X = AB$ and $Y = BA$ into Eq.~\eqref{eq:anticom-com-equality} gives
\begin{equation}
\label{eq:operator-inequality}
\abs{ \{A, B\} }^{2} + \abs{ [A, B] }^{2} = 2 ( A B^{2} A + B A^{2} B ) \leq 4 \cdot \mathbb{1},
\end{equation}
where the upper bound follows from the fact that $A^{2}, B^{2} \leq \mathbb{1}$.
\subsubsection{The CHSH inequality}
\label{sec:chsh}
In 1964 John Bell showed that measuring quantum systems leads to stronger-than-classical correlations~\cite{bell64}. In 1969 Clauser, Horne, Shimony and Holt spelt out the simplest scenario in which this can be observed \cite{clauser69}. Let $\sH_{A}$ and $\sH_{B}$ be Hilbert spaces and let $A_{0}, A_{1} \in \cH(\sH_{A})$ and $B_{0}, B_{1} \in \cH(\sH_{B})$ be binary observables. The CHSH operator is defined as
\begin{equation*}
W = A_{0} \otimes B_{0} + A_{0} \otimes B_{1} + A_{1} \otimes B_{0} - A_{1} \otimes B_{1}
\end{equation*}
and the CHSH value equals $\beta = \tr (W \rho_{AB})$, where $\rho_{AB}$ is a bipartite quantum state on $\sH_{A} \otimes \sH_{B}$. It is known that there exist a state and observables that yield $\beta = 2 \sqrt{2}$. On the other hand, if we restrict ourselves to classical systems (which can be enforced by requiring the observables to commute, i.e.~$[A_{0}, A_{1}] = [B_{0}, B_{1}] = 0$) we can only reach $\beta = 2$. This scenario can be equivalently cast as a two-player game in which Alice receives $x$, Bob receives $y$ (both chosen uniformly at random) and are required to output $a$ and $b$, respectively. The game is won if $a \oplus b = x \cdot y$ and it is straightforward to show that the winning probability of this game $\pwin$ and the CHSH value $\beta$ are related by
\begin{equation*}
\pwin = \frac{1}{2} + \frac{\beta}{8}.
\end{equation*}
Therefore, the optimal classical winning probability equals $\frac{3}{4}$, while the optimal quantum winning probability equals $\frac{1}{2} + \frac{1}{2 \sqrt{2}} \approx 0.85$.
\subsubsection{Guessing with postmeasurement information}
\label{sec:guessing}
We start by defining the guessing probability and min-entropy for a classical-quantum (cq) state (we denote the quantum register by $B$ to be consistent with the protocol in which it is the dishonest Bob who faces the task of guessing).
\begin{df}
Let $\rho_{XB}$ be a cq-state
\begin{equation*}
\rho_{XB} = \sum_{x} p_{x} \ketbraq{x} \otimes \rho_{x}^{B},
\end{equation*}
where $\rho_{x}^{B}$ are (normalised) quantum states and $\sum_{x} p_{x} = 1$. The optimal guessing probability of $X$ given access to $B$ is defined as
\begin{equation*}
\pguess(X | B) := \max_{ \{ M_{x} \}_{x} } \sum_{x} p_{x} \cdot \tr ( M_{x} \rho_{x}^{B} ),
\end{equation*}
where the maximisation is taken over all POVMs. The conditional min-entropy of $X$ given $B$ is defined as
\begin{equation*}
\hmin(X | B) := - \log \pguess(X | B).
\end{equation*}
\end{df}
\noindent Note that computing the guessing probability can be written as a semidefinite program, i.e.~it can be computed efficiently (in the input dimension). For a classical probability distribution $P_{XY}$ the expression simplifies to
\begin{equation*}
\pguess(X | Y) = \sum_{y} P_{Y} (y) \cdot \max_{x} P_{X | Y}(x | y).
\end{equation*}
Alternatively, this maximisation can be written more compactly as
\begin{equation*}
\pguess(X | Y) = \max_{f} \Pr[X = f(Y)],
\end{equation*}
where the maximisation is taken over deterministic functions $f : \cY \to \cX$. It can be shown~\cite{wehner08b} that the min-entropy is additive on tensor products, i.e.~given two uncorrelated cq-states $\rho_{X_{1} B_{1}} \otimes \rho_{X_{2} B_{2}}$ we have
\begin{equation*}
\hmin(X_{1} X_{2} | B_{1} B_{2}) = \hmin(X_{1} | B_{1}) + \hmin(X_{2} | B_{2}).
\end{equation*}
We also need the notion of smooth min-entropy.
\begin{df}
For $\varepsilon \geq 0$ let $\cB^{\varepsilon}(\rho_{XB})$ be the ball of cq-states of radius $\varepsilon$ around $\rho_{XB}$, i.e.~$\sigma_{XB} \in \cB^{\varepsilon}(\rho_{XB})$ iff $\sigma_{XB}$ is a cq-state and
\begin{equation*}
\frac{1}{2} \norm{ \sigma_{XB} - \rho_{XB} }_{1} \leq \varepsilon,
\end{equation*}
where $\norm{\cdot}_{1}$ denotes the trace norm (Schatten 1-norm). Then, the smooth min-entropy of a cq-state $\rho_{XB}$ is defined as
\begin{equation*}
\hmineps(X | B)_{\rho} := \sup_{\sigma_{XB} \in \cB^{\varepsilon}(\rho_{XB})} \hmin(X | B)_{\sigma}.
\end{equation*}
\end{df}

Security analysis of two-party cryptography in the bounded or noisy storage model leads to the task of \emph{guessing with postmeasurement information} originally considered by Ballester, Wehner and Winter~\cite{ballester08}. Let $\rho_{XY\hspace{-1pt}B}$ be a tripartite ccq-state, where $X$ is a classical register taking values in $\cX$, $Y$ is a classical register taking values in $\cY$ and $B$ is the quantum system of Bob. In the postmeasurement information scenario Bob is forced to measure his subsystem $B$ to obtain some classical information $F$ before learning $Y$. Later he learns the postmeasurement information $Y$ and must produce a guess for $X$. We will later show that without loss of generality we can assume that the outcomes of Bob's measurement (i.e.~the possible values of $F$) are labelled by functions $f : \cY \to \cX$ such that Bob's optimal guess upon receiving $y$ is $f(y)$. Equivalently we can think of the outcome of the measurement as a sequence of guesses: one for \emph{every possible value} of the postmeasurement information.
\begin{df}
Let $\rho_{XY\hspace{-1pt}B}$ be a ccq-state
\begin{equation*}
\rho_{XY\hspace{-1pt}B} = \sum_{xy} p_{xy} \ketbraq{x} \otimes \ketbraq{y} \otimes \rho_{xy}^{B}.
\end{equation*}
The optimal guessing probability of $X$ given access to $B$ with $Y$ as postmeasurement information is defined as
\begin{equation*}
\pguess( X | B Y\post ) := \max_{ \{ M_{f} \}_{f} } \sum_{\substack{x, y, f\\x = f(y)}} p_{xy} \cdot \tr ( M_{f} \rho_{xy}^{B} ),
\end{equation*}
where the maximisation is taken over all POVMs with $\abs{\cX}^{\abs{\cY}}$ outcomes labelled by functions $f : \cY \to \cX$ and the star (\hspace{2pt}$^{*}$\hspace{-1pt}) indicates that $Y$ is only available \emph{after} the measurement. The conditional min-entropy of $X$ given $B$ with $Y$ as postmeasurement information is defined as
\begin{equation*}
\hmin(X | B Y\post) := - \log \pguess(X | B Y\post).
\end{equation*}
\end{df}
\noindent This is a useful formulation because defining
\begin{equation*}
\sigma_{f}^{B} = \sum_{\substack{x, y\\x = f(y)}} p_{xy} \rho_{xy}^{B}
\end{equation*}
allows us to rewrite the objective function as
\begin{equation*}
\sum_{\substack{x, y, f\\x = f(y)}} p_{xy} \cdot \tr ( M_{f} \rho_{xy}^{B} ) = \sum_{f} \tr (M_{f} \sigma_{f}^{B}),
\end{equation*}
which is equivalent to the standard guessing probability $\pguess(F | B)$ for the (unnormalised) state
\begin{equation*}
\rho_{FB} = \sum_{f} \ketbraq{f} \otimes \sigma_{f}^{B}.
\end{equation*}
Therefore, this problem can also be solved efficiently using semidefinite programming techniques~\cite{ballester08}. Moreover, just like in the standard guessing scenario, the min-entropy is additive over tensor products, i.e.~given two uncorrelated ccq-states $\rho_{X_{1} Y_{1} B_{1}} \otimes \rho_{X_{2} Y_{2} B_{2}}$ we have
\begin{equation}
\label{eq:additivity}
\hmin(X_{1} X_{2} | B_{1} B_{2} Y_{1}\post Y_{2}\post) = \hmin(X_{1} | B_{1} Y_{1}\post) + \hmin(X_{2} | B_{2} Y_{2}\post).
\end{equation}
The following proposition gives an alternative (but equivalent) formulation of the min-entropy with postmeasurement information.
\begin{prop}
\label{prop:equivalence}
Let $\rho_{XY\hspace{-1pt}B}$ be a ccq-state and let $\cP$ be the set of tripartite probability distributions over $X, Y$ and $K$ which can be obtained by measuring subsystem $B$, i.e.~$P_{XY\hspace{-1pt}K} \in \cP$ iff there exists a measurement $\{N_{k}\}_{k}$ such that
\begin{equation*}
\Pr[ X = x, Y = y, K = k] = p_{xy} \cdot \tr (N_{k} \rho_{xy}^{B}).
\end{equation*}
Then, the following relation holds
\begin{equation}
\label{eq:equivalence}
\pguess( X | B Y\post ) = \sup_{ P_{XY\hspace{-1pt}K} \in \cP } \pguess(X | K Y).
\end{equation}
\end{prop}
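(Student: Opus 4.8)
The plan is to prove the two inequalities $\pguess(X|BY\post) \le \sup_{P_{XYK}\in\cP}\pguess(X|KY)$ and $\pguess(X|BY\post) \ge \sup_{P_{XYK}\in\cP}\pguess(X|KY)$ separately. The single idea driving both is a correspondence between the function-labelled POVMs appearing in the definition of $\pguess(X|BY\post)$ and the two-step classical strategy ``apply an arbitrary measurement $\{N_k\}_k$ to $B$, then guess $X$ as a deterministic function of the outcome $k$ and the postmeasurement information $y$''. Throughout I would exploit the fact that for a classical distribution the guessing probability can be written as $\pguess(X|KY) = \max_{g}\Pr[X = g(K,Y)]$, where $g$ ranges over deterministic functions $\cK\times\cY\to\cX$, exactly as recorded earlier in the excerpt.

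For the direction $\le$, I would start from an optimal POVM $\{M_f\}_f$ for $\pguess(X|BY\post)$, whose outcomes are already labelled by functions $f:\cY\to\cX$. Reading this POVM as a measurement of $B$ with outcome register $K=F$ produces a distribution $P_{XYF}\in\cP$. Using the particular guessing function $g(f,y)=f(y)$ gives
\begin{equation*}
\Pr[X=g(F,Y)] = \sum_{\substack{x,y,f\\ x=f(y)}} p_{xy}\,\tr(M_f \rho_{xy}^{B}) = \pguess(X|BY\post),
\end{equation*}
and since $\pguess(X|FY)$ maximises over \emph{all} guessing functions it can only be larger, yielding the claim after taking the supremum over $\cP$.

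For the direction $\ge$, I would fix an arbitrary measurement $\{N_k\}_k$ producing some $P_{XYK}\in\cP$ together with an arbitrary guessing function $g:\cK\times\cY\to\cX$. The key step is coarse-graining: for each function $f:\cY\to\cX$ I collect the outcomes whose induced guess equals $f$, i.e.\ I set $M_f := \sum_{k:\,g(k,\cdot)=f} N_k$, where $g(k,\cdot)$ denotes the map $y\mapsto g(k,y)$. This $\{M_f\}_f$ is a valid POVM indexed by functions, because the sets $\{k : g(k,\cdot)=f\}$ partition the outcome alphabet. A short computation then shows
\begin{equation*}
\sum_{\substack{x,y,f\\ x=f(y)}} p_{xy}\,\tr(M_f \rho_{xy}^{B}) = \sum_{\substack{x,y,k\\ x=g(k,y)}} p_{xy}\,\tr(N_k \rho_{xy}^{B}) = \Pr[X=g(K,Y)],
\end{equation*}
so the left-hand side is feasible for $\pguess(X|BY\post)$ while the right-hand side is an arbitrary value of $\Pr[X=g(K,Y)]$. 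Maximising over $g$ and then over $\{N_k\}_k$ gives $\pguess(X|BY\post)\ge\sup_{\cP}\pguess(X|KY)$, and combining the two bounds proves the equality.

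I do not expect a genuine analytic obstacle here, since the content is purely operational; as a byproduct the supremum is in fact attained, by a measurement with at most $\abs{\cX}^{\abs{\cY}}$ outcomes. The one point requiring care is the $\ge$ direction, where one must verify that allowing measurements with arbitrarily many outcomes $k$ cannot beat the function-labelled POVMs. The coarse-graining above resolves this precisely: grouping outcomes by the column function $g(k,\cdot)$ discards only information irrelevant to the final guess, so nothing is lost by passing to the finite, function-indexed family.
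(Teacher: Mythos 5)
Your proposal is correct and follows essentially the same route as the paper: the first direction reads the function-labelled POVM as a classical measurement outcome and guesses via $f(y)$, and the second direction coarse-grains an arbitrary measurement by the induced column function $y \mapsto g(k,y)$, exactly as the paper does with $M_f = \sum_{k : g_k = f} N_k$ (the paper just fixes $g$ to be the optimal $\argmax$ rule up front rather than maximising at the end). No gaps.
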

\begin{proof}
Let us first show that the left-hand side is never larger than the right-hand side. Let $\{M_{f}\}_{f}$ be the POVM which saturates the left-hand side and let $P_{XY\hspace{-1pt}F}$ be the resulting probability distribution. Then
\begin{align*}
\pguess( X | B Y\post ) &= \sum_{\substack{x, y, f\\x = f(y)}} p_{xy} \cdot \tr ( M_{f} \rho_{xy}^{B} ) = \sum_{\substack{x, y, f\\x = f(y)}} P_{XY\hspace{-1.25pt}F}(x y f) = \sum_{y, f} P_{Y\hspace{-1.25pt}F}(y f) \cdot P_{X | Y\hspace{-1.25pt}F}( f(y) | y f)\\
&\leq \sum_{y, f} P_{Y\hspace{-1.25pt}F}(y f) \cdot \max_{x} P_{X | Y\hspace{-1.25pt}F}( x | y f) = \pguess(X | F Y) \leq \sup_{ P_{XY\hspace{-1pt}K} \in \cP } \pguess(X | K Y).
\end{align*}
To prove the other direction consider an arbitrary measurement $\{N_{k}\}_{k}$ (with a finite number of outcomes) which leads to the probability distribution $P_{X YK}$. For every value of $k$ we define a function $g_{k} : \cY \to \cX$ such that
\begin{equation*}
g_{k}(y) = \argmax_{x} P_{X | Y\hspace{-1.25pt}K}(x | yk).
\end{equation*}
This allows us construct a new measurement whose outcomes are labelled by functions $f : \cY \to \cX$
\begin{equation*}
M_{f} = \sum_{k : g_{k} = f} N_{k}.
\end{equation*}
Using this measurement gives
\begin{align*}
\pguess( X | B Y\post ) &\geq \sum_{\substack{x, y, f\\x = f(y)}} p_{xy} \cdot \tr ( M_{f} \rho_{xy} ) = \sum_{\substack{x, y, f\\x = f(y)}} \sum_{k : g_{k} = f} p_{xy} \cdot \tr ( N_{k} \rho_{xy} ) = \sum_{\substack{x, y, k\\x = g_{k}(y)}} P_{XY\hspace{-1pt}K}(xyk)\\
&= \sum_{y, k} P_{Y\hspace{-1pt}K}(yk) \cdot P_{X | Y\hspace{-1pt}K}( g_{k}(y) | ky) = \sum_{y, k} P_{Y\hspace{-1pt}K}(yk) \cdot \max_{x} P_{X | Y\hspace{-1pt}K}( x | yk) = \pguess(X | K Y).
\end{align*}
By considering measurements that approach the optimal guessing probability we conclude that Eq.~\eqref{eq:equivalence} holds. In particular, this implies that the supremum can be replaced by a maximum.
\end{proof}
The final security statement in the scenario of devices with memory is phrased in terms of \emph{sequential guessing probability}. Intuitively, this corresponds to the situation in which Bob is required to guess a sequence of random variables but before each guess he gains access to an extra ``advice variable''.
\begin{df}
Let $P_{X_{1} X_{2} \ldots X_{n} Y_{1} Y_{2} \ldots Y_{n}}$ be a probability distribution of $2n$ variables, where $X_{j}$ and $Y_{j}$ take values in some arbitrary finite sets $\cX$ and $\cY$, respectively. The sequential guessing probability of $X^{n} = X_{1} X_{2} \ldots X_{n}$ given $Y^{n} = Y_{1} Y_{2} \ldots Y_{n}$ is defined as
\begin{equation*}
\pguessseq( X^{n} | Y^{n} ) = \max_{ \{f_{j}\}_{j} } \Pr[ \bigwedge_{j = 1}^{n} X_{j} = f_{j}( Y_{1} Y_{2} \ldots Y_{j} ) ],
\end{equation*}
where the maximisation is taken over deterministic functions $\{f_{j}\}_{j}$ such that $f_{j} : \cY^{\times j} \to \cX$.
\end{df}
The sequential character of this quantity makes it meaningful to talk about a subset of rounds, e.g.~the probability of successfully guessing the first $j$ variables $\pguessseq( X^{j} | Y^{j} )$ is a well-defined quantity that depends only on $P_{X^{j} Y^{j}}$. This stands in contrast to the usual guessing probability in which evaluating the probability of successfully guessing the first bit requires the knowledge of the complete set of ``advice variables''. Thanks to this property the sequential guessing probability behaves well under conditioning
\begin{equation*}
\pguessseq( X^{n} | Y^{n} ) = \pguessseq( X^{n - 1} | Y^{n - 1} ) \cdot \pguess ( X_{n} | Y^{n}, \cS ),
\end{equation*}
where the second term is just the standard guessing probability of the last bit \emph{conditional} on event $\cS$, which corresponds to (sequentially) guessing the first $n - 1$ bits correctly.
\subsubsection{Relation between transmitting classical information and uncertainty against noisy storage}
Let $\cF : \cL( \sH_{Q_{\textnormal{in}}} ) \to \cL( \sH_{Q_{\textnormal{out}}} )$ be a quantum channel (a completely positive, trace preserving map) and suppose we want to use it to transmit $k$ bits of information. The following definition captures how well this can be achieved.
\begin{df}
The optimal probability of successfully transmitting $k$ bits of information through the channel $\cF$ is defined as
\begin{equation*}
\Psucc(k) = \max_{ \{ \rho_{x} \}_{x}, \{ M_{x} \}_{x} } \frac{1}{2^{k}} \sum_{x \in \{0, 1\}^{k}} \tr [ M_{x} \cF(\rho_{x}) ],
\end{equation*}
where $\{ \rho_{x} \}_{x}$ represents the encoding procedure (a set of $2^{k}$ normalised states on $Q_{\textnormal{in}}$) while $\{ M_{x} \}_{x}$ is the decoding measurement (a measurement on $Q_{\textnormal{out}}$ with $2^{k}$ outcomes).
\end{df}
\noindent The following lemma by K\"{o}nig, Wehner and Wullschleger relates the success probability to the maximal decrease in entropy in the noisy storage setting~\cite{konig12}.
\begin{lem}[Lemma II.2,~\cite{konig12}]
\label{lem:uncertainty-transmission}
Let $\cF : \cL( \sH_{Q_{\textnormal{in}}} ) \to \cL( \sH_{Q_{\textnormal{out}}} )$ be a CPTP map. Consider an arbitrary ccq-state $\rho_{XTQ}$ and define
\begin{equation*}
\sigma_{XT Q_{\textnormal{out}}} := ( \id_{XT} \otimes \cF_{Q_{\textnormal{in}} \to Q_{\textnormal{out}}}) (\rho_{XTQ_{\textnormal{in}}}),
\end{equation*}
where $\id$ stands for the identity channel. For any $\varepsilon > 0$ we have
\begin{equation*}
\hmineps( X | T Q_{\textnormal{out}} )_{\sigma} \geq - \log \Psucc \big( \lfloor \hmin(X | T) - \log (1/\varepsilon) \rfloor \big).
\end{equation*}
\end{lem}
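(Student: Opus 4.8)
The plan is to prove this by turning a strategy that guesses $X$ from $(T, Q_{\textnormal{out}})$ into a scheme that transmits classical bits through $\cF$, so that the limited transmission capability quantified by $\Psucc$ bounds the guessing probability. I would use the operational reading of the smooth min-entropy, $\hmineps(X|TQ_{\textnormal{out}})_\sigma = -\log\inf_{\tilde\sigma\in\cB^\varepsilon(\sigma)}\pguess(X|TQ_{\textnormal{out}})_{\tilde\sigma}$, so that it suffices to exhibit a single state $\tilde\sigma$ within trace distance $\varepsilon$ of $\sigma$ whose optimal guessing probability is at most $\Psucc(k)$, with $k=\lfloor\hmin(X|T)-\log(1/\varepsilon)\rfloor$. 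Writing the optimal guesser as a family of POVMs $\{M_x^t\}_x$ on $Q_{\textnormal{out}}$ (one per value $t$ of the classical register $T$) and letting $\rho_{x,t}$ be the conditional state of $Q_{\textnormal{in}}$ given $X=x,T=t$, the guessing probability takes the form $\sum_{t}P_T(t)\sum_x P_{X|T}(x|t)\,\tr[M_x^t\,\cF(\rho_{x,t})]$, which is already shaped like a channel-decoding success probability whose codewords are the states $\rho_{x,t}$ sent through $\cF$ and whose decoder is $\{M_x^t\}$.

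The core observation I would isolate first is the clean case in which, for each $t$, the conditional distribution $P_{X|T}(\cdot|t)$ is uniform over a set of exactly $2^m$ values. Then the inner sum $\sum_x 2^{-m}\tr[M_x^t\cF(\rho_{x,t})]$ is literally the success probability of transmitting $m$ bits through $\cF$ using the encoding $\{\rho_{x,t}\}_x$ and the decoding $\{M_x^t\}$ --- crucially this is a legitimate encoder/decoder pair on $Q_{\textnormal{in}}/Q_{\textnormal{out}}$ even though it is tailored to the fixed $t$, so each term is bounded by $\Psucc(m)$, and averaging over $t$ gives $\pguess(X|TQ_{\textnormal{out}})\le\Psucc(m)$ with $m=\hmin(X|T)$. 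This isolates the mechanism by which the channel's transmission limit controls the guessing probability, and it already yields the statement (up to the integrality floor) whenever the conditional law is flat.

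The main obstacle is to reduce the general, non-uniform instance to this flat case, and this is exactly where the smoothing parameter and the $-\log(1/\varepsilon)$ term enter. The difficulty is genuine: one cannot simply restrict $X$ to a high-probability sub-support of size $2^k$, since replacing the spread-out conditional law by one supported on only $2^k$ of its values discards almost all of its mass and hence lies at trace distance close to $1$, not $\varepsilon$. Instead I would carve out the effective message set at the level of probability magnitudes --- using the defining bound $P_{XT}(x,t)\le 2^{-\hmin(X|T)}\sigma_T(t)$ together with a dyadic/typicality decomposition of the conditional law --- and discard only the atypical part, whose total weight I would arrange to be at most $\varepsilon$ (this defines $\tilde\sigma\in\cB^\varepsilon(\sigma)$). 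The surviving contribution must then be matched to a uniform codebook of size $2^k$ so that the clean bound of the previous paragraph applies, the loss of $\log(1/\varepsilon)$ bits being the price of flattening the retained mass to a single dyadic level. Carrying out this accounting so that the error budget $\varepsilon$ and the codebook size $2^k$ fit together exactly --- while guaranteeing that the retained part genuinely presents a flat, size-$2^k$ message set to which the clean bound applies --- is the delicate technical step; the floor in $k$ is then only the rounding to an integer number of transmitted bits, after which taking $-\log$ of $\pguess(X|TQ_{\textnormal{out}})_{\tilde\sigma}\le\Psucc(k)$ yields the claim.
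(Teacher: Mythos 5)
First, a remark on scope: the paper does not prove this statement itself --- it is imported verbatim as Lemma II.2 of Ref.~\cite{konig12} --- so the comparison below is against the argument in that reference. Your ``clean case'' is exactly the right core mechanism: when $P_{X|T}(\cdot|t)$ is uniform on $2^m$ points, the pair $(\{\rho_{x,t}\}_x,\{M^t_x\}_x)$ is a legitimate encoder/decoder for $2^m$ messages, so the conditional guessing probability is at most $\Psucc(m)$, and averaging over $t$ is harmless because $\Psucc$ is a maximum over schemes. The gap is in the reduction of the general case to this flat case. A dyadic decomposition of the law of $X$ does not deliver what you need: the conditional law can place constant mass on many dyadic levels simultaneously (e.g.\ mass $1/2$ on $2^{k}$ atoms of size $2^{-k-1}$ and mass $1/2$ on $2^{2k-1}$ atoms of size $2^{-2k}$), so no level can be discarded within an $\varepsilon$ budget, and ``flattening the retained mass to a single dyadic level'' is not an $\varepsilon$-small perturbation. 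If instead you treat the levels separately, the dangerous ones are precisely the levels containing fewer than $2^{k}$ atoms (for these $\Psucc(\lfloor\log|L_j|\rfloor)$ exceeds $\Psucc(k)$); such a level has mass at most $|L_j|2^{-j}\lesssim 2^{k-\hmin}\approx\varepsilon$, but there can be on the order of $\log(|\cX|/\varepsilon)$ of them, so the total mass you must smooth away is $\varepsilon\cdot\mathrm{polylog}$, not $\varepsilon$; on top of that each level is flat only up to a factor of $2$, which costs a constant in front of $\Psucc(k)$. Carried through, your route yields something like $H_{\min}^{\varepsilon'}\geq-\log\Psucc(k)-O(1)$ with $\varepsilon'=O(\varepsilon\log(|\cX|/\varepsilon))$, which is weaker than the stated bound.

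The missing idea is that one should not try to make the law of $X$ itself flat (this is impossible within $\varepsilon$ whenever $X$ is spread over many scales); one should instead \emph{compress} $X$ to a $k$-bit variable whose law \emph{is} nearly flat. Since $\hmin(X|T)\geq k+\log(1/\varepsilon)$, every atom of $P_{X|T=t}$ has mass at most $\varepsilon\cdot 2^{-k}$ (after a Markov argument discarding an $\varepsilon$-fraction of bad $t$'s, since the min-entropy bound only holds on average over $t$), so a greedy binning of the atoms into $2^{k}$ bins produces $Z=f(X,T)\in\{0,1\}^{k}$ whose conditional law given $T$ is $\varepsilon$-close to uniform; smoothing $\sigma$ makes it exactly uniform. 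Any strategy guessing $X$ from $(T,Q_{\textnormal{out}})$ guesses $Z=f(X,T)$ at least as well, and guessing a uniform $k$-bit $Z$ that was encoded into $Q_{\textnormal{in}}$ (the conditional state given $Z=z,T=t$ is a valid codeword) and sent through $\cF$ is by definition the $2^{k}$-message transmission problem, bounded by $\Psucc(k)$ with no loss. This hashing step, rather than a typicality decomposition of $X$'s law, is what makes the $\varepsilon$ and the $\log(1/\varepsilon)$ fit together exactly as in the statement.
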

\subsubsection{Trade-off between non-locality and uncertainty against classical adversaries}
As mentioned before a crucial component of our analysis is the trade-off between how well a pair of devices can perform in the CHSH test and how unpredictable the output of a single device is against a classical adversary. It turns out that such a (tight) trade-off can be established by finding the right measure of incompatibility of binary observables. In our previous work we have used the effective anticommutator as a measure of incompatibility~\cite{kaniewski14}. Unfortunately, this quantity does not allow us to bound uncertainty against classical side information (see Appendix \ref{app:effective-anticommutation-insufficient} for a counterexample) so here we consider a more refined quantity: the \emph{absolute effective anticommutator}. Proposition~\ref{prop:chsh-anticommutator} shows that observing a CHSH violation places an upper bound on the absolute effective anticommutator.
\begin{prop}
\label{prop:chsh-anticommutator}
Let $\rho_{AB} \in \cH(\sH_{A} \otimes \sH_{B})$ be a bipartite quantum state and let $A_{0}, A_{1} \in \cH(\sH_{A})$ and $B_{0}, B_{1} \in \cH(\sH_{B})$ be observables. The absolute effective anticommutator on Alice's side is defined as
\begin{equation*}
\varepsilon_{+} := \frac{1}{2} \tr \big( \abs{ \{A_{0}, A_{1} \} } \rho_{A} \big).
\end{equation*}
The CHSH value of the setup is defined as $\beta := \tr ( W \rho_{AB} )$ for
\begin{equation*}
W = A_{0} \otimes B_{0} + A_{0} \otimes B_{1} + A_{1} \otimes B_{0} - A_{1} \otimes B_{1}.
\end{equation*}
The following relation holds
\begin{equation}
\label{eq:beta-absolute-effective-anticommutator}
\abs{\beta} \leq 2 \sqrt{ 1 + \sqrt{ 1 - \varepsilon_{+}^{2} } }.
\end{equation}
\end{prop}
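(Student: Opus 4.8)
The plan is to bound $\abs{\beta}$ by $\sqrt{\tr(W^{2}\rho_{AB})}$ and then to extract the anticommutator from $W^{2}$ using the operator identity~\eqref{eq:operator-inequality}. Writing $\beta=\tr(W\sqrt{\rho_{AB}}\cdot\sqrt{\rho_{AB}})$ and using that $W$ is Hermitian, the Cauchy--Schwarz inequality in Hilbert--Schmidt form gives $\abs{\beta}\le\sqrt{\tr(W^{2}\rho_{AB})}$, so it suffices to show $\tr(W^{2}\rho_{AB})\le 4(1+\sqrt{1-\varepsilon_{+}^{2}})$. Before expanding $W^{2}$ I would reduce to dichotomic Bob observables, i.e.\ $B_{0}^{2}=B_{1}^{2}=\mathbb{1}$: the map $(B_{0},B_{1})\mapsto\abs{\beta}$ is convex (it is the modulus of a function affine in $(B_{0},B_{1})$) and is therefore maximised over the convex set $\{\norm{B_{0}}\le1,\ \norm{B_{1}}\le1\}$ at an extreme point, where $B_{i}^{2}=\mathbb{1}$. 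Crucially $\varepsilon_{+}$ depends only on Alice's marginal $\rho_{A}$ and observables, so this reduction leaves the target bound untouched.

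Writing $W=A_{0}\otimes(B_{0}+B_{1})+A_{1}\otimes(B_{0}-B_{1})$ and squaring, I would split $W^{2}$ into the two ``square'' terms $A_{0}^{2}\otimes(B_{0}+B_{1})^{2}+A_{1}^{2}\otimes(B_{0}-B_{1})^{2}$ and the remaining ``cross'' terms. For the square part, since $0\le A_{i}^{2}\le\mathbb{1}$ and $(B_{0}\pm B_{1})^{2}\ge0$ we have $A_{i}^{2}\otimes(B_{0}\pm B_{1})^{2}\le\mathbb{1}\otimes(B_{0}\pm B_{1})^{2}$, and using $(B_{0}+B_{1})^{2}+(B_{0}-B_{1})^{2}=2(B_{0}^{2}+B_{1}^{2})=4\mathbb{1}$ this part contributes at most $4$ in expectation. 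Note that this step uses nothing beyond $A_{i}^{2}\le\mathbb{1}$, so non-projective observables on Alice's side cost nothing here. For the cross part the dichotomic assumption on Bob makes the $\{A_{0},A_{1}\}\otimes(B_{0}^{2}-B_{1}^{2})$ piece vanish, leaving only $-[A_{0},A_{1}]\otimes[B_{0},B_{1}]$. Setting $\tilde{A}=i[A_{0},A_{1}]$ and $\tilde{B}=i[B_{0},B_{1}]$ (both Hermitian) and splitting $\tilde{A}$ into positive and negative parts yields the operator bound $\tilde{A}\otimes\tilde{B}\le\norm{\tilde{B}}\,\abs{\tilde{A}}\otimes\mathbb{1}$; since $\norm{[B_{0},B_{1}]}\le2$ for dichotomic $B_{i}$, the cross part contributes at most $2\tr(\abs{[A_{0},A_{1}]}\rho_{A})$.

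It then remains to convert $\tr(\abs{[A_{0},A_{1}]}\rho_{A})$ into $\varepsilon_{+}$. Identity~\eqref{eq:operator-inequality} gives $\abs{[A_{0},A_{1}]}^{2}\le4\mathbb{1}-\abs{\{A_{0},A_{1}\}}^{2}$, and since the square root is operator monotone, $\abs{[A_{0},A_{1}]}\le\sqrt{4\mathbb{1}-\abs{\{A_{0},A_{1}\}}^{2}}$. Taking the expectation in $\rho_{A}$ and applying Jensen's inequality twice---first concavity of $t\mapsto\sqrt{t}$, then the variance bound $\tr(Y^{2}\rho_{A})\ge\tr(Y\rho_{A})^{2}$ for $Y=\abs{\{A_{0},A_{1}\}}$---gives $\tr(\abs{[A_{0},A_{1}]}\rho_{A})\le\sqrt{4-\tr(\abs{\{A_{0},A_{1}\}}\rho_{A})^{2}}=2\sqrt{1-\varepsilon_{+}^{2}}$, where I used $\tr(\abs{\{A_{0},A_{1}\}}\rho_{A})=2\varepsilon_{+}$. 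Collecting the two contributions yields $\tr(W^{2}\rho_{AB})\le4+4\sqrt{1-\varepsilon_{+}^{2}}$, which is exactly the claim.

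I expect the main obstacle to be the treatment of non-projective observables on Alice's side. The naive route---assuming $A_{i}^{2}=\mathbb{1}$ and invoking Landau's identity $W^{2}=4\mathbb{1}-[A_{0},A_{1}]\otimes[B_{0},B_{1}]$---breaks for general $A_{i}$, because the expansion of $W^{2}$ then contains extra terms proportional to $A_{0}^{2}-A_{1}^{2}$. The idea that rescues the argument is to group $W^{2}$ into the square terms $A_{i}^{2}\otimes(B_{0}\pm B_{1})^{2}$ before bounding, so that only $A_{i}^{2}\le\mathbb{1}$ (and positivity of $(B_{0}\pm B_{1})^{2}$) is ever used. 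The other point requiring care is checking that the operator-monotonicity and the two Jensen steps are legitimate for the \emph{absolute} anticommutator, which is what the refined quantity $\varepsilon_{+}$ demands and which is precisely the improvement over the effective anticommutator used previously.
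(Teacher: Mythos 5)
Your proof is correct and follows the same skeleton as the paper's: Cauchy--Schwarz to pass from $\beta^{2}$ to $\tr(W^{2}\rho_{AB})$, a bound of $4$ on the non-commutator part of $W^{2}$, the estimate $-[A_{0},A_{1}]\otimes[B_{0},B_{1}]\leq 2\,\abs{[A_{0},A_{1}]}\otimes\mathbb{1}$, and finally the conversion of $\tr(\abs{[A_{0},A_{1}]}\rho_{A})$ into $2\sqrt{1-\varepsilon_{+}^{2}}$ via the identity~\eqref{eq:operator-inequality}. You deviate in two local steps, both legitimate. First, you dispose of the awkward term $\{A_{0},A_{1}\}\otimes(B_{0}^{2}-B_{1}^{2})$ by an extreme-point reduction to dichotomic $B_{i}$ (convexity of $\abs{\beta}$ in $(B_{0},B_{1})$, extreme points of the Hermitian unit ball being reflections, and the crucial observation that $\varepsilon_{+}$ is unaffected); the paper instead keeps Bob's observables general and bounds the first three terms of $W^{2}$ jointly by diagonalising $\{A_{0},A_{1}\}$ and noting that $(2+\lambda_{k})B_{0}^{2}+(2-\lambda_{k})B_{1}^{2}\leq 4\cdot\mathbb{1}$ since $\abs{\lambda_{k}}\leq 2$. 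Your reduction is arguably cleaner conceptually but imports Bauer's maximum principle, whereas the paper's eigenbasis computation is entirely elementary. Second, where the paper applies the Cauchy--Schwarz inequality $[\tr(\abs{X}\rho)]^{2}\leq\tr(\abs{X}^{2}\rho)$ twice (once for the commutator, once for the anticommutator), you use operator monotonicity of the square root followed by Jensen for the concave function $t\mapsto\sqrt{t}$ plus the variance bound; these chains are equivalent and yield identical constants. The final paragraph of your write-up correctly identifies why the grouping $A_{i}^{2}\otimes(B_{0}\pm B_{1})^{2}$ is what makes non-projective $A_{i}$ harmless.
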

\begin{proof}
The proof is a sequence of elementary inequalities (either at the level of numbers or operators). We will repeatedly use the Cauchy-Schwarz inequality, which says that for arbitrary operators $X$ and $Y$ we have
\begin{equation*}
\abs{ \tr( X^{\dagger} Y ) }^{2} \leq \tr( X^{\dagger} X ) \cdot \tr (Y^{\dagger} Y).
\end{equation*}
We start by setting $X^{\dagger} = W \sqrt{\rho_{AB}}$ and $Y = \sqrt{\rho_{AB}}$ which gives
\begin{equation}
\label{eq:beta-squared}
\beta^{2} = \big[ \tr ( W \rho_{AB} ) \big]^{2} \leq \tr (W^{2} \rho_{AB}).
\end{equation}
Writing out $W^{2}$ explicitly gives
\begin{equation*}
W^{2} = A_{0}^{2} \otimes (B_{0} + B_{1})^{2} + A_{1}^{2} \otimes (B_{0} - B_{1})^{2} + \{A_{0}, A_{1}\} \otimes ( B_{0}^{2} - B_{1}^{2} ) - [A_{0}, A_{1}] \otimes [B_{0}, B_{1}].
\end{equation*}
Let us first focus on the first three terms. Upperbounding $A_{0}^{2}$ and $A_{1}^{2}$ by $\mathbb{1}$ gives
\begin{equation*}
A_{0}^{2} \otimes (B_{0} + B_{1})^{2} + A_{1}^{2} \otimes (B_{0} - B_{1})^{2} + \{A_{0}, A_{1}\} \otimes ( B_{0}^{2} - B_{1}^{2} ) \leq \mathbb{1} \otimes 2 ( B_{0}^{2} + B_{1}^{2} ) + \{A_{0}, A_{1}\} \otimes ( B_{0}^{2} - B_{1}^{2} ).
\end{equation*}
Writing the identity in the eigenbasis of the anticommutator $\{ A_{0}, A_{1} \} = \sum_{k} \lambda_{k} \ketbraq{e_{k}}$ gives
\begin{equation*}
\mathbb{1} \otimes 2 ( B_{0}^{2} + B_{1}^{2} ) + \{A_{0}, A_{1}\} \otimes ( B_{0}^{2} - B_{1}^{2} ) = \sum_{k} \ketbraq{e_{k}} \otimes \big[ (2 + \lambda_{k}) B_{0}^{2} + (2 - \lambda_{k}) B_{1}^{2} \big] \leq 4 \cdot \mathbb{1} \otimes \mathbb{1},
\end{equation*}
where the last inequality comes from upperbounding $B_{0}^{2}$ and $B_{1}^{2}$ by $\mathbb{1}$ (note that $\abs{\lambda_{k}} \leq 2$). We have therefore established that
\begin{equation*}
W^{2} \leq 4 \cdot \mathbb{1} \otimes \mathbb{1} + \big( - [A_{0}, A_{1}] \otimes [B_{0}, B_{1}] \big).
\end{equation*}
We bound the second term by its (operator) modulus
\begin{equation*}
- [A_{0}, A_{1}] \otimes [B_{0}, B_{1}] \leq \abs[\big]{[A_{0}, A_{1}] \otimes [B_{0}, B_{1}]} = \abs{[A_{0}, A_{1}]} \otimes \abs{[B_{0}, B_{1}]}.
\end{equation*}
Neglecting the anticommutator term in inequality~\eqref{eq:operator-inequality} leads to
\begin{equation*}
\abs{[B_{0}, B_{1}]}^{2} \leq 4 \cdot \mathbb{1},
\end{equation*}
which implies that $\abs{[B_{0}, B_{1}]} \leq 2 \cdot \mathbb{1}$. Therefore,
\begin{equation*}
W^{2} \leq 4 \cdot \mathbb{1} \otimes \mathbb{1} + 2 \, \abs{ [A_{0}, A_{1}] } \otimes \mathbb{1}
\end{equation*}
and
\begin{equation}
\label{eq:w-squared}
\tr( W^{2} \rho_{AB} ) \leq 4 + 2 \tr \big( \abs{ [A_{0}, A_{1}] } \rho_{A} \big).
\end{equation}
To upperbound $\tr \big( \abs{ [A_{0}, A_{1}] } \rho_{A} \big)$ we again use the Cauchy-Schwarz inequality with $X^{\dagger} = \abs{ [A_{0}, A_{1}] } \sqrt{\rho_{A}}$ and $Y = \sqrt{\rho_{A}}$ which gives
\begin{equation}
\label{eq:abs-commutator}
\big[ \tr \big( \abs{ [A_{0}, A_{1}] } \rho_{A} \big) \big]^{2} \leq \tr \big( \abs{ [A_{0}, A_{1}] }^{2} \rho_{A} \big).
\end{equation}
Inequality~\eqref{eq:operator-inequality} implies that
\begin{equation}
\label{eq:abs-commutator-squared}
\tr \big( \abs{ [A_{0}, A_{1}] }^{2} \rho_{A} \big) \leq 4 - \tr \big( \abs{\{A_{0}, A_{1} \}}^{2} \rho_{A} \big).
\end{equation}
Using the Cauchy-Schwarz inequality one last time with $X^{\dagger} = \abs{ \{A_{0}, A_{1}\} } \sqrt{\rho_{A}}$ and $Y = \sqrt{\rho_{A}}$ gives
\begin{equation}
\label{eq:abs-anticommutator-squared}
\big[ \tr \big( \abs{ \{ A_{0}, A_{1} \} } \rho_{A} \big) \big]^{2} \leq \tr \big( \abs{ \{ A_{0}, A_{1} \} }^{2} \rho_{A} \big).
\end{equation}
Since the left-hand side of Eq.~\eqref{eq:abs-anticommutator-squared} equals $4 \varepsilon_{+}^{2}$ combining it with inequalities~\eqref{eq:beta-squared}, \eqref{eq:w-squared}, \eqref{eq:abs-commutator} and \eqref{eq:abs-commutator-squared} gives
\begin{equation*}
\beta^{2} \leq 4 \Big( 1 + \sqrt{ 1 - \varepsilon_{+}^{2} } \, \Big).
\end{equation*}
Taking a square root leads to the desired result.
\end{proof}
\noindent It is easy to verify that this relation is in fact tight (it suffices to consider projective rank-1 measurements on the maximally entangled state of two qubits). In Proposition~\ref{prop:uncertainty-anticommutator} we show that the absolute effective anticommutator being small implies uncertainty against classical adversaries.
\begin{prop}
\label{prop:uncertainty-anticommutator}
Let $\rho_{AK}$ be a quantum-classical state
\begin{equation*}
\rho_{AK} = \sum_{k} p_{k} \rho_{k}^{A} \otimes \ketbraq{k}
\end{equation*}
and let $A_{0}$ and $A_{1}$ be two observables acting on the register $A$. Let $\varepsilon_{+} = \frac{1}{2} \tr \big( \abs{ \{ A_{0}, A_{1} \} } \rho_{A} \big)$ for $\rho_{A} = \sum_{k} p_{k} \rho_{k}^{A}$. Measuring the observable chosen by a uniformly random register $\Theta$ and storing the outcome in the register $X$ leads to the following probability distribution.
\begin{equation*}
\Pr[ X = x, \Theta = \theta, K = k] = \frac{1}{2} \cdot p_{k} \cdot \frac{1 + \tr (A_{\theta} \rho_{k}^{A}) }{2}.
\end{equation*}
Then, the guessing probability satisfies
\begin{equation}
\label{eq:pguess-absolute-effective-anticommutator}
\pguess(X | K \Theta) \leq \frac{1}{2} + \frac{1}{2} \sqrt{ \frac{ 1 + \varepsilon_{+} }{2} }.
\end{equation}
\end{prop}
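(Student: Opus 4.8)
The plan is to first reduce the statement to a concrete classical estimation problem and then to control the resulting quantity by the anticommutator. Since the register $K$ and the (uniform) basis register $\Theta$ are both classically available to the guesser, $\pguess(X | K\Theta)$ decomposes as an average over $(k,\theta)$ of single-bit guessing probabilities. For fixed $k$ and $\theta$ the outcome $X$ is a biased coin, so the optimal strategy guesses the majority outcome and succeeds with probability $\tfrac12 + \tfrac12\abs{\tr(A_\theta\rho_k^A)}$. Writing $a_k := \tr(A_0\rho_k^A)$ and $b_k := \tr(A_1\rho_k^A)$ and averaging over the uniform $\theta\in\{0,1\}$ and over $k$ with weights $p_k$ yields the exact expression
\begin{equation*}
\pguess(X | K\Theta) = \frac12 + \frac14 \sum_k p_k\big( \abs{a_k} + \abs{b_k} \big),
\end{equation*}
so it suffices to prove $\sum_k p_k(\abs{a_k}+\abs{b_k}) \le \sqrt{2(1+\varepsilon_+)}$.

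Next I would pass from this $\ell_1$-type quantity to an $\ell_2$-type quantity that is easier to connect to operators. The elementary bound $\abs{a_k}+\abs{b_k}\le\sqrt2\,\sqrt{a_k^2+b_k^2}$ together with concavity of the square root (Jensen) gives
\begin{equation*}
\sum_k p_k\big(\abs{a_k}+\abs{b_k}\big) \le \sqrt2\,\sqrt{\textstyle\sum_k p_k\big(a_k^2+b_k^2\big)},
\end{equation*}
so the whole proposition reduces to the single inequality $\sum_k p_k(a_k^2+b_k^2)\le 1+\varepsilon_+$.

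The heart of the argument is a \emph{pointwise} bound $a_k^2+b_k^2 \le 1 + \tfrac12\tr\big(\abs{\{A_0,A_1\}}\rho_k^A\big)$, which after averaging with weights $p_k$ and using $\rho_A=\sum_k p_k\rho_k^A$ gives exactly $1+\varepsilon_+$. The naive move of bounding $a_k^2\le\tr(A_0^2\rho_k^A)$ and $b_k^2\le\tr(A_1^2\rho_k^A)$ separately discards the anticommutator and only yields the trivial $a_k^2+b_k^2\le2$; the trick is to keep the observables coupled. I would write $a_k^2+b_k^2=\max_\theta \tr(C_\theta\rho_k^A)^2$ with $C_\theta:=\cos\theta\,A_0+\sin\theta\,A_1$, apply Cauchy-Schwarz at the optimal angle to get $\tr(C_\theta\rho_k^A)^2\le\tr(C_\theta^2\rho_k^A)$, and then use $A_0^2,A_1^2\le\mathbb{1}$ to obtain the operator inequality $C_\theta^2\le\mathbb{1}+\tfrac12\sin(2\theta)\{A_0,A_1\}$. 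Bounding $\abs{\sin(2\theta)}\le1$ and then $\tr(\{A_0,A_1\}\rho_k^A)\le\tr(\abs{\{A_0,A_1\}}\rho_k^A)$ (valid since $M\le\abs{M}$ for Hermitian $M$) delivers the pointwise estimate. Chaining the three displays gives $\pguess \le \tfrac12 + \tfrac14\sqrt{2(1+\varepsilon_+)} = \tfrac12+\tfrac12\sqrt{(1+\varepsilon_+)/2}$, as claimed.

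I expect the pointwise bound to be the main obstacle, for two reasons. First, one must resist applying Cauchy-Schwarz to $A_0$ and $A_1$ in isolation and instead optimize over the combined direction $C_\theta$, which is precisely what retains the anticommutator cross-term. Second, there is a conceptual subtlety caused by the side-information register $K$: the guesser conditions on $k$, so the relevant object is $\sum_k p_k\tr(A_\theta\rho_k^A)^2$, which by convexity can strictly exceed $\tr(A_\theta\rho_A)^2$. This is why the bound must be established \emph{pointwise in $k$} and the averaged state $\rho_A$ (hence $\varepsilon_+$) is allowed to appear only after the final linear summation; conflating $\rho_k^A$ with $\rho_A$ at an earlier stage would break the argument.
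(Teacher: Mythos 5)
Your proof is correct and follows essentially the same route as the paper's: condition on $k$, bound the per-round guessing probability by a per-$k$ anticommutator quantity, average over $k$ using concavity of the square root, and pass to the absolute effective anticommutator via $\abs{\tr(M \rho_{k}^{A})} \leq \tr(\abs{M}\rho_{k}^{A})$. The only difference is that the paper imports the single-state bound $\pguess(X | K = k, \Theta) \leq \frac{1}{2} + \frac{1}{2}\sqrt{(1+\abs{\varepsilon_{k}})/2}$ from an earlier reference, whereas you derive it explicitly (and correctly) via the rotated observable $C_{\theta} = \cos\theta\, A_{0} + \sin\theta\, A_{1}$ together with the operator inequality $C_{\theta}^{2} \leq \mathbb{1} + \frac{1}{2}\sin(2\theta)\{A_{0}, A_{1}\}$, which makes the argument self-contained.
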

\begin{proof}
Let the effective anticommutator conditional on $K = k$ be $\varepsilon_{k} = \frac{1}{2} \tr \big( \{A_{0}, A_{1}\} \rho_{k}^{A} \big)$. As shown in Ref.~\cite{kaniewski14} the guessing probability averaged over the two bases satisfies
\begin{equation*}
\pguess(X | K = k, \Theta) \leq \frac{1}{2} + \frac{1}{2} \sqrt{ \frac{ 1 + \abs{ \varepsilon_{k} } }{2} }.
\end{equation*}
Averaging over different values of $K$
\begin{equation*}
\pguess(X | K \Theta) = \sum_{k} p_{k} \pguess(X | K = k, \Theta) \leq \frac{1}{2} + \sum_{k} \frac{p_{k}}{2} \sqrt{ \frac{ 1 + \abs{ \varepsilon_{k} } }{2} } \leq \frac{1}{2} + \frac{1}{2} \sqrt{ \frac{ 1 + \sum_{k} p_{k} \abs{ \varepsilon_{k} } }{2} },
\end{equation*}
where we have used the concavity of the square root. For any Hermitian operator $A$ we have $\abs{ \tr (A \rho) } \leq \tr (\abs{A} \rho)$ which implies
\begin{equation*}
\sum_{k} p_{k} \abs{ \varepsilon_{k} } = \frac{1}{2} \sum_{k} p_{k} \abs[\big]{ \tr ( \{A_{0}, A_{1}\} \rho_{k}^{A} ) } \leq \frac{1}{2} \sum_{k} p_{k} \tr \big( \abs{ \{A_{0}, A_{1}\} } \rho_{k}^{A} \big) = \frac{1}{2} \tr \big( \abs{ \{A_{0}, A_{1}\} } \rho_{A} \big) = \varepsilon_{+}.
\end{equation*}
Therefore, the final bound is
\begin{equation*}
\pguess(X | K \Theta) \leq \frac{1}{2} + \frac{1}{2} \sqrt{ \frac{ 1 + \varepsilon_{+} }{2} }.
\end{equation*}
\end{proof}
\noindent It turns out that this relation is tight and can be saturated by the same setup as before, which implies that the resulting trade-off between the CHSH violation and uncertainty against classical adversaries is tight.

\subsubsection{Security definitions for WSE}
\label{sec:security-definitions}
Let $X^{n}$ be the classical register representing the $n$-bit string given to Alice and let $I$ be the classical register representing the subset of indices given to Bob.
Using the notation introduced in Section~\ref{sec:introduction} security for honest Alice means that Bob should find it difficult to guess the entire string $X^{n}$.
\begin{df}
\label{df:honest-alice}
Let $B$ be the register containing all the information that Bob might acquire during the protocol. Let $\cS_{A}$ be the set of states on registers $X^{n}, B$ that (dishonest) Bob may enforce at the end of the protocol. A WSE protocol is $(\lambda, \varepsilon)$-secure for honest Alice if the smooth min-entropy satisfies
\begin{equation*}
\hmineps(X^{n} | B) \geq \lambda n
\end{equation*}
for all $\sigma_{X^{n} B} \in \cS_{A}$.
\end{df}
\noindent Security for honest Bob, on the other hand, requires that the string $X^{n}$ takes a particular value (which Alice cannot influence anymore) and that Alice remains ignorant about the index set $\cI$ that Bob received.
\begin{df}
\label{df:honest-bob}
Let $\cS_{B}$ be the set of states on registers $X^{n}, I, A$ that (dishonest) Alice may enforce at the end of the protocol. A WSE protocol is (perfectly) secure for honest Bob if every state $\sigma_{X^{n} I A} \in \cS_{B}$ can be written as
\begin{equation*}
\sigma_{X^{n} I A} = \sigma_{X^{n} A} \otimes \frac{\mathbb{1}_{I}}{2^{n}}
\end{equation*}
for some cq-state $\sigma_{X^{n} A}$.
\end{df}
\subsection{Protocol for DI WSE and security analysis}
Since DI security can only be certified by observing some Bell violation we must make two modifications to Protocol~\ref{prot:original}: (i) we have to turn it into an entanglement-based scheme and (ii) we must introduce some way of testing the devices. The protocol we propose requires four devices in total: three for Alice and one for Bob. Below we describe the devices available to Alice.
\begin{enumerate}
\item The source emits bipartite quantum states $\rho_{AB}$. According to the ideal specification, it should emit the maximally entangled state of two qubits, i.e.~$\rho_{AB} = \ketbraq{\Phi_{+}}_{AB}$ for $\ket{\Phi_{+}}_{AB} = \frac{1}{\sqrt{2}} \big( \ket{0}_{A} \ket{0}_{B} + \ket{1}_{A} \ket{1}_{B} \big)$.
\item The main device performs one out of two binary measurements represented by observables $A_{0}, A_{1}$. According to the ideal specification, these should correspond to the computational and Hadamard basis measurements, $A_{0} = \sigma_{z}$, $A_{1} = \sigma_{x}$.
\item The test device performs one out of two binary measurements represented by observables $B_{0}, B_{1}$. According to the ideal specification, these should correspond to $B_{0} = \frac{1}{\sqrt{2}} ( \sigma_{z} + \sigma_{x} )$, $B_{1} = \frac{1}{\sqrt{2}} ( \sigma_{z} - \sigma_{x} )$.
\end{enumerate}
The only device available to Bob is a measurement device with two settings whose ideal specification coincides precisely with that of the main device of Alice (so that the outcome are identical if the measurement settings coincide).
\subsubsection{Security analysis for memoryless devices}
\label{sec:memoryless}
We call a device memoryless if it acts in the same manner every time we use it: the source always emits the same state and the measurement devices always perform the same measurements (and there are no correlations between different uses). This greatly simplifies the security analysis for several reasons: (i) we may assume that the state, measurement operators (and all quantities derived from them) are well-defined objects, (ii) probabilities can be estimated (to arbitrary precision) by repeating the experiment multiple times and (iii) testing can be completely separated from the actual protocol. In particular, the last point means that testing can be done beforehand and does need to be explicitly included in the protocol. In our protocol Alice tests her three devices by using them to violate the CHSH inequality. More specifically, she estimates the CHSH value
\begin{equation*}
\beta = \tr \big[ (A_{0} \otimes B_{0} + A_{0} \otimes B_{1} + A_{1} \otimes B_{0} - A_{1} \otimes B_{1}) \rho_{AB} \big].
\end{equation*}
We know that if $\beta \leq 2$ (no violation is observed), no security can be guaranteed and the devices cannot be used for device-independent cryptography. Therefore, from now on we assume that $\beta > 2$. While no finite set of statistical data allows Alice to determine the \emph{exact value} of $\beta$, she can estimate it to arbitrary precision which is sufficient for our analysis. Since dealing with finite statistics is not the main focus of this paper, we assume that she can actually determine $\beta$ exactly.

Recall that Proposition~\ref{prop:chsh-anticommutator} establishes a connection between the observed CHSH violation and the local incompatibility of observables (on either side). Since the test device will not take part in the actual protocol, we want to estimate the incompatibility of the main device. If $\varepsilon_{+}$ is the absolute effective anticommutator of the main device
\begin{equation*}
\varepsilon_{+} := \frac{1}{2} \tr \big( \abs{ \{A_{0}, A_{1} \} } \rho_{A} \big),
\end{equation*}
then from Proposition~\ref{prop:chsh-anticommutator} we know that
\begin{equation}
\label{eq:epsilon-plus-vs-beta}
\varepsilon_{+} \leq \frac{\beta}{4} \sqrt{ 8 - \beta^{2} }.
\end{equation}
Our goal is to show that having an upper bound on $\varepsilon_{+}$ suffices to prove security (for honest Alice) of the following DI WSE protocol.
\begin{prot}{DI WSE in the bounded/noisy storage model}
\label{prot:memoryless}
\begin{enumerate}[label=(\arabic*)]
\item Alice uses the source to generate $n$ bipartite states. She chooses a uniform $n$-bit string $\theta^{n} \in \{0, 1\}^{n}$ and uses the main device to measure the $A$ register generated in the $j$-th run with $\theta_{j}$ as the input. All the $B$ registers are passed to Bob.
\item Bob chooses a uniform $n$-bit string $\theta'^{n} \in \{0, 1\}^{n}$ and measures the $j$-th subsystem using $\theta_{j}'$ as the input to his measurement device.
\item Alice waits a fixed amount of time (this waiting time motivates the restriction on Bob's quantum memory) and then sends $\theta^{n}$ to Bob.
\item Bob determines the index set as
\begin{equation}
\label{eq:index-set}
\cI := \{ j \in [n] : \theta_{j} = \theta_{j}' \}
\end{equation}
and obtains the corresponding substring $x_{\cI}$.
\end{enumerate}
\end{prot}
It is easy to see that if the devices comply with the ideal specification, this is exactly the entanglement-based variant of Protocol~\ref{prot:original}, hence, correctness follows straightforwardly. Security argument for honest Bob is closely related to the simulation argument given in the original paper~\cite{konig12} so we just describe it informally. The correct way of defining the string $X^{n}$ is by lifting the noisy memory restriction, i.e.~we allow Bob to store all the states, wait until the receipt of the basis information and only then perform all the measurements in the correct bases. This uniquely specifies the state $\sigma_{X^{n} A}$ needed for Definition~\ref{df:honest-bob}. At the same time Bob generates a random $n$-bit string $\theta'^{n}$ and determines the index set $\cI$ through relation~\eqref{eq:index-set}. It is easy to check that this results in a uniform distribution over all possible subsets uncorrelated from the outside world (because $\theta'^{n}$ was chosen uniformly at random).

Security analysis for honest Alice turns out to be more challenging.
\begin{prop}
Protocol~\ref{prot:memoryless} executed against Bob whose quantum storage is bounded to be of dimension at most $d$ implements WSE which is $(\lambda, \varepsilon)$-secure for honest Alice for $\varepsilon = 0$ and
\begin{equation*}
\lambda \geq h( \varepsilon_{+} ) - \frac{\log d}{n},
\end{equation*}
where
\begin{equation*}
h(x) := 1 - \log \bigg( 1 + \sqrt{ \frac{ 1 + x }{2} } \, \bigg).
\end{equation*}
\end{prop}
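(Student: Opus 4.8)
The plan is to lower-bound $\hmin(X^{n} \mid B)$, where I take Bob's total side information to consist of three registers: the classical data $K$ that he extracts from the returning $B$ systems, the quantum register $Q$ of dimension at most $d$ that he keeps in storage, and the basis string $\Theta^{n}$ that Alice announces in step~(3). The decisive feature of the timing is that $K$ and $Q$ are produced from $B^{n}$ \emph{before} $\Theta^{n}$ is revealed, so $\Theta^{n}$ should be treated as postmeasurement information. Since we work in the bounded-storage model, the chain rule \eqref{eq:chain-rule} is exact and no smoothing is needed, so we may set $\varepsilon = 0$.

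First I would peel off the quantum storage with the bounded-storage chain rule \eqref{eq:chain-rule}, using $\dim Q \leq d$, to obtain
\begin{equation*}
\hmin(X^{n} \mid K Q \Theta^{n}) \geq \hmin(X^{n} \mid K \Theta^{n}) - \log d.
\end{equation*}
This reduces the task to controlling $\hmin(X^{n} \mid K \Theta^{n})$, i.e.~uncertainty against a purely classical adversary who holds $K$ together with the postmeasurement information $\Theta^{n}$. Because $K$ is obtained by measuring $B^{n}$ and $\Theta^{n}$ is learned only afterwards, Proposition~\ref{prop:equivalence} gives $\pguess(X^{n} \mid K \Theta^{n}) \leq \pguess(X^{n} \mid B^{n} \Theta^{n*})$ for every such $K$, and hence $\hmin(X^{n} \mid K \Theta^{n}) \geq \hmin(X^{n} \mid B^{n} \Theta^{n*})$.

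Next I would invoke the memoryless assumption. The source emits $\rho_{AB}^{\otimes n}$, the settings are independent and uniform, and the main device acts identically and independently in each round, so the global ccq-state factorises as $\rho_{X^{n} \Theta^{n} B^{n}} = \rho_{X \Theta B}^{\otimes n}$. Additivity of the postmeasurement min-entropy \eqref{eq:additivity} then yields $\hmin(X^{n} \mid B^{n} \Theta^{n*}) = n \, \hmin(X_{1} \mid B_{1} \Theta_{1}\post)$. For a single round I would bound the postmeasurement guessing probability by combining Proposition~\ref{prop:equivalence} with Proposition~\ref{prop:uncertainty-anticommutator}: the latter says that for \emph{any} classical $K_{1}$ extracted from $B_{1}$ one has $\pguess(X_{1} \mid K_{1} \Theta_{1}) \leq \tfrac{1}{2} + \tfrac{1}{2}\sqrt{(1 + \varepsilon_{+})/2}$, where the absolute effective anticommutator $\varepsilon_{+}$ is computed from the fixed marginal $\rho_{A}$. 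Taking the supremum over $K_{1}$ gives the same bound for $\pguess(X_{1} \mid B_{1} \Theta_{1}\post)$, and since $-\log\!\big(\tfrac{1}{2} + \tfrac{1}{2}\sqrt{(1 + \varepsilon_{+})/2}\big) = h(\varepsilon_{+})$, assembling the three steps gives $\hmin(X^{n} \mid K Q \Theta^{n}) \geq n\, h(\varepsilon_{+}) - \log d$, which is the claim after dividing by $n$. The bound depends on Bob's strategy only through $\varepsilon_{+}$, so it holds uniformly over all $\sigma_{X^{n} B} \in \cS_{A}$ consistent with the tested value (and thus, via Proposition~\ref{prop:chsh-anticommutator}, with the observed CHSH violation).

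The step I expect to be the crux is reconciling the joint nature of Bob's attack with the single-round analysis. A dishonest Bob need not process the returning systems one at a time; his channel $B^{n} \to K Q$ may act jointly, so that conditioning on the classical outcome can correlate Alice's $n$ subsystems even though the source is a product state. The reason the argument nevertheless goes through is twofold. First, the additivity identity \eqref{eq:additivity} guarantees that the optimal \emph{joint} postmeasurement strategy does no better than $n$ independent single-round ones, so correlations created by a joint measurement cannot push the min-entropy below $n\, \hmin(X_{1} \mid B_{1} \Theta_{1}\post)$. Second, the single-round bound of Proposition~\ref{prop:uncertainty-anticommutator} depends on the measured data only through the marginal $\rho_{A}$, which is fixed by the source and is insensitive to how $K$ is formed; this is exactly why passing to the \emph{absolute} effective anticommutator (rather than the plain effective anticommutator) is indispensable. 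Together these two facts let me bypass Bob's joint channel entirely.
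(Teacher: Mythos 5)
Your proof is correct and follows essentially the same route as the paper's: the bounded-storage chain rule to remove $Q$, Proposition~\ref{prop:equivalence} to reduce to the postmeasurement quantity $\hmin(X^{n} \mid B^{n} \Theta^{n\postinside})$, additivity~\eqref{eq:additivity} over the product state to reduce to a single round, and Proposition~\ref{prop:uncertainty-anticommutator} for the single-round bound $h(\varepsilon_{+})$. Your identification of the crux (Bob's joint measurement can correlate the rounds, which is exactly what the additivity of the postmeasurement min-entropy neutralises) matches the paper's own remark on this point; the only difference is the immaterial order in which the chain rule is applied.
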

\begin{proof}
Using the source $n$ times produces $\rho_{A^{n} B^{n}} = \bigotimes_{j = 1}^{n} \rho_{A_{j} B_{j}}$.
Alice measures all her subsystems using the main device (which produces $\rho_{X^{n} \Theta^{n} B^{n}} = \bigotimes_{j = 1}^{n} \rho_{X_{j} \Theta_{j} B_{j}}$) and then Bob measures his subsystems to obtain $K$ (which gives $P_{X^{n} \Theta^{n} K}$). It is important to emphasise that this final probability distribution is no longer of product form because Bob's measurement can introduce correlations between different rounds. First note that from Proposition~\ref{prop:equivalence} we have
\begin{equation}
\label{eq:memoryless-1}
\hmin(X^{n} | K \Theta^{n} ) \geq \hmin(X^{n} | B^{n} \Theta^{n\postinside}),
\end{equation}
where the left-hand side is evaluated on the probability distribution $P_{X^{n} \Theta^{n} K}$, while the right-hand side is evaluated on the quantum state $\rho_{X^{n} \Theta^{n} B^{n}}$. Because this quantum state is of tensor product form we have
\begin{equation}
\label{eq:memoryless-2}
\hmin(X^{n} | B^{n} \Theta^{n\postinside}) = \sum_{j = 1}^{n} \hmin(X_{j} | B_{j} \Theta_{j}\post) = n \cdot \hmin(X_{1} | B_{1} \Theta_{1}\post).
\end{equation}
where the first equality comes from the fact that the min-entropy is additive over tensor products (see Eq.~\eqref{eq:additivity}) and the second simply expresses the fact that all the rounds are identical. Now we need the bound the entropy produced while measuring a single copy of $\rho_{AB}$. Suppose that Bob measures the subsystem $B$ to produce a classical random variable $K$. From Proposition~\ref{prop:uncertainty-anticommutator} we know that the min-entropy of the probability distribution $P_{X K \Theta}$ satisfies
\begin{equation*}
\hmin(X | K \Theta) \geq h( \varepsilon_{+} ).
\end{equation*}
Since this bound is valid \emph{for all measurements that Bob might perform}, it also holds for the optimal measurement which achieves $\hmin(X | B \Theta^{*}) = \hmin(X | K \Theta)$ (see Proposition~\ref{prop:equivalence}). Therefore, we also have
\begin{equation}
\label{eq:memoryless-3}
\hmin(X | B \Theta^{*}) \geq h( \varepsilon_{+} ).
\end{equation}
Combining expressions~\eqref{eq:memoryless-1}, \eqref{eq:memoryless-2} and \eqref{eq:memoryless-3} gives
\begin{equation}
\label{eq:min-entropy-against-classical}
\hmin(X^{n} | K \Theta^{n} ) \geq n h( \varepsilon_{+} ).
\end{equation}
Finally, including the quantum memory of Bob (of dimension $d$) leads to
\begin{equation*}
\hmin(X^{n} | K Q \Theta^{n} ) \geq n h( \varepsilon_{+} ) - \log d. \qedhere
\end{equation*}
\end{proof}
\noindent Clearly, if the dimension of Bob's memory is fixed, choosing large enough $n$ brings the min-entropy rate arbitrarily close to $h(\varepsilon_{+})$.
\begin{prop}
\label{prop:noisy}
Protocol~\ref{prot:memoryless} executed against Bob whose quantum storage is represented by a quantum channel $\cF$ implements WSE which is $(\lambda, \varepsilon)$-secure for honest Alice, where $\varepsilon > 0$ is an arbitrary positive constant and
\begin{equation*}
\lambda \geq - \frac{1}{n} \log \Psucc \big( \lfloor n h( \varepsilon_{+} ) - \log (1/\varepsilon) \rfloor \big).
\end{equation*}
\end{prop}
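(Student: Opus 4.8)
The plan is to run the proof of the preceding (bounded-storage) proposition essentially unchanged up to the classical bound \eqref{eq:min-entropy-against-classical}, and then replace the single place where bounded storage was used — the dimension chain rule \eqref{eq:chain-rule} — with Lemma~\ref{lem:uncertainty-transmission}, which is the analogous statement for a storage \emph{channel} $\cF$ rather than a storage of fixed dimension. Correctness and security for honest Bob carry over verbatim from Protocol~\ref{prot:memoryless}, so the only thing to establish is the stated lower bound on $\lambda$, i.e.~on the smooth min-entropy of $X^{n}$ against Bob.

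First I would fix the attack model, exactly mirroring the bounded-storage case. The honest execution produces the tensor-product state $\rho_{X^{n} \Theta^{n} B^{n}} = \bigotimes_{j} \rho_{X_{j} \Theta_{j} B_{j}}$. Before the waiting time, and crucially \emph{before} learning $\Theta^{n}$, dishonest Bob applies an arbitrary encoding to $B^{n}$ that outputs a classical register $K$ (stored perfectly, as classical storage is free) together with a quantum register $Q_{\textnormal{in}}$, which the storage channel then degrades to $Q_{\textnormal{out}} = \cF(Q_{\textnormal{in}})$. After the waiting time $\Theta^{n}$ is announced, so Bob's total side information is $(K, \Theta^{n}, Q_{\textnormal{out}})$ and the quantity to bound is $\hmineps(X^{n} | K \Theta^{n} Q_{\textnormal{out}})$. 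The central step is to apply Lemma~\ref{lem:uncertainty-transmission} with the identifications $X = X^{n}$, $T = (K, \Theta^{n})$ and $Q = Q_{\textnormal{in}}$; since $\cF$ touches only $Q$ while $(K,\Theta^{n})$ is classical and left alone, $\rho_{X^{n}(K\Theta^{n})Q_{\textnormal{in}}}$ is a legitimate ccq-state and the lemma yields
\begin{equation*}
\hmineps(X^{n} | K \Theta^{n} Q_{\textnormal{out}}) \geq - \log \Psucc \big( \lfloor \hmin(X^{n} | K \Theta^{n}) - \log (1/\varepsilon) \rfloor \big).
\end{equation*}
Into this I feed the classical bound: because $K$ is the outcome of a measurement of $B^{n}$ made without knowledge of $\Theta^{n}$, the chain \eqref{eq:memoryless-1}--\eqref{eq:memoryless-3} applies unchanged and gives $\hmin(X^{n} | K \Theta^{n}) \geq n h(\varepsilon_{+})$, i.e.~\eqref{eq:min-entropy-against-classical}. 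Finally, since $\Psucc(k)$ is nonincreasing in $k$ (faithfully transmitting more bits through a fixed channel is harder), the map $k \mapsto - \log \Psucc(\lfloor k \rfloor)$ is nondecreasing, so replacing $\hmin(X^{n} | K \Theta^{n})$ by its lower bound $n h(\varepsilon_{+})$ only weakens the right-hand side, leaving $\hmineps(X^{n} | K \Theta^{n} Q_{\textnormal{out}}) \geq - \log \Psucc(\lfloor n h(\varepsilon_{+}) - \log(1/\varepsilon)\rfloor)$; dividing by $n$ gives the claimed $\lambda$.

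The min-entropy bookkeeping is routine, copied from the previous proposition. The two points that need genuine care are the following. I must use the postmeasurement-information equivalence in the correct direction, namely \eqref{eq:memoryless-1}, $\hmin(X^{n} | K \Theta^{n}) \geq \hmin(X^{n} | B^{n} \Theta^{n\postinside})$ — every concrete measurement outcome $K$ leaves $X^{n}$ at least as uncertain as the optimal postmeasurement guess — and I must insist that the encoding producing $K$ genuinely predates the announcement of $\Theta^{n}$, since it is precisely this ordering that preserves the postmeasurement structure and hence the whole security guarantee. The other point is that the monotonicity of $\Psucc$ should be stated explicitly, as it is what lets the inequality on $\hmin(X^{n}|K\Theta^{n})$ propagate through the floor and the $\Psucc$ bound to the final expression.
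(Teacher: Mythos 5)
Your proposal is correct and follows essentially the same route as the paper: the paper's proof likewise applies Lemma~\ref{lem:uncertainty-transmission} with the identifications $X \leftrightarrow X^{n}$ and $T \leftrightarrow K\Theta^{n}$ to the classical bound~\eqref{eq:min-entropy-against-classical}, then observes that $K, Q_{\textnormal{out}}, \Theta^{n}$ are all of Bob's registers so Definition~\ref{df:honest-alice} applies. Your explicit remark about the monotonicity of $\Psucc$ (needed to substitute the lower bound $n h(\varepsilon_{+})$ inside the floor) is a small detail the paper leaves implicit, but it is not a different argument.
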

\begin{proof}
Applying Lemma~\ref{lem:uncertainty-transmission} to Eq.~\eqref{eq:min-entropy-against-classical} (identify $X^{n} \leftrightarrow X$ and $K \Theta^{n} \leftrightarrow T$) gives
\begin{equation*}
\hmineps(X^{n} | K Q_{\textnormal{out}} \Theta^{n}) \geq - \log \Psucc \big( \lfloor n h( \varepsilon_{+} ) - \log (1/\varepsilon) \rfloor \big).
\end{equation*}
Since in the noisy storage scenario $K, Q_{\textnormal{out}}$ and $\Theta^{n}$ are the only registers available to Bob this coincides precisely with Definition~\ref{df:honest-alice}.
\end{proof}
\subsubsection{Security analysis for general devices against sequential attacks}
\label{sec:general}
As mentioned before in order to test devices that might behave differently in different rounds one must intersperse the test rounds with the live rounds. The natural solution is to introduce a biased coin-flip at the beginning of every round whose outcome determines whether the following round will be a test round (with probability $q$) or a live round (with probability $1-q$). In the previous scenario test rounds happened entirely within Alice's laboratory (using the three devices provided by Bob) and only live rounds required Alice and Bob to interaction. To make the sequential analysis conceptually simpler we give Bob even more power and allow him to operate the test box (the device used for the CHSH test), i.e.~if Alice wants to play a test round she simply sends the second input (the one she would previously use for the test device) to Bob who comes back with the outcome. Note that in this model the second part of the quantum state generated by the source always ends up with Bob (regardless of whether it is a test round or a live round), which brings us closer to the familiar scenario of two-player nonlocal games as shown in Fig.~\ref{fig:nonlocal-game}.

\begin{figure}
\begin{tikzpicture}[scale=1, line width=0.8]
	\draw[-, decorate, decoration={snake,amplitude=.8mm,segment length=4mm}] (-2.4, 0) to (1.25, 0);
	\textsq{-3}{0}{Alice};
	\draw [->] (-3, 1.2) to (-3, 0.7);
	\node[align=center] at (-3, 1.45) {$\theta$};
	\draw [->] (-3, -0.7) to (-3, -1.2);
	\node[align=center] at (-3, -1.4) {$x$};
	\textsq{1.85}{0}{Bob};
	\draw [->] (1.85, 1.2) to (1.85, 0.7);
	\node[align=center] at (1.85, 1.45) {$t/\theta'$};
	\draw [->] (1.85, -0.7) to (1.85, -1.2);
	\node[align=center] at (1.85, -1.4) {$y/k$};
\end{tikzpicture}
\caption{The key to the security proof against sequential attacks is to combine the CHSH game between the main device and the test device with the postmeasurement game between the main device and Bob's device. As already noted in~\cite{wehner08b}, if the postmeasurement game can be won perfectly, then the CHSH inequality cannot be violated. Here, we establish a complete trade-off between winning the CHSH game and the postmeasurement game. If the CHSH game can be won well, then the probability for Bob to succeed in the postmeasurement guessing game is low and hence the min-entropy about Alice's resulting string given classical information is high.}
\label{fig:nonlocal-game}
\end{figure}
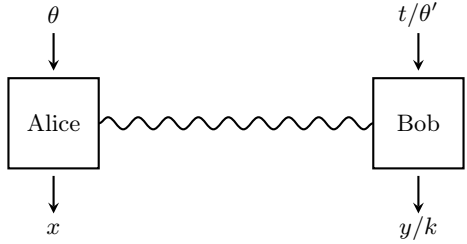
Let us stress that the interaction with the main device is always the same: regardless of whether the $j$-th round is a test round or a live round Alice always inputs a uniformly random bit $\theta_{j}$. This guarantees that the device remains ignorant whether it is currently being tested or used for a live round. On the other hand, Bob's interaction does depend on the type of round performed. Let $q_{j}$ be the bit which specifies whether the $j$-th round is a live round ($q_{j} = 0$) or a test round ($q_{j} = 1$). If Alice decides to test the devices, she will choose a random bit $t_{j}$ and request Bob to use it as an input in the CHSH game and return the outcome $y_{j}$. On the other hand, if Alice decides to play a live round, she will simply announce it to Bob and (according to the original protocol) she will not expect a response. Indeed, in the most general adversarial scenario Bob would leave his quantum system untouched and only at the end of the protocol (immediately before the memory bound) would he measure his entire system to produce some classical information $k$. Once he has received the basis information, he computes his guess as a deterministic function of $k$ and $\theta_{1}, \theta_{2}, \ldots, \theta_{n}$. In the sequential model we force Bob to produce some classical side information $k_{j}$ in every round and we require that his guess in the $j$-th round is a deterministic function (chosen before the protocol begins) of $k_{j}$, $\theta_{j}$ and any information from the previous rounds. In other words, for the $j$-th round (which we assume to be a live round) the probability of winning equals
\begin{equation*}
\Pr[ X_{j} = f_{j}( K^{j}, \Theta^{j} ) ],
\end{equation*}
where $f_{j} : ( \cK \times \{0, 1\} )^{\times j} \to \{0, 1\}$ is an arbitrary function chosen by Bob before the protocol begins. The summary of random variables generated in each round is presented in Table~\ref{table:random-variables}.
\begin{table}
\centering
\begin{tabular}{c | c | c}
every round & live round ($Q_{j} = 0$) & test round ($Q_{j} = 1$)\\
\hline
$Q_{j}, \Theta_{j}, X_{j}$ & $K_{j}$ & $T_{j}, Y_{j}$\\
\end{tabular}
\caption{The random variables generated in the $j$-th round. In every round Alice chooses the round type $Q_{j}$, generates a random input $\Theta_{j}$ and obtains an outcome $X_{j}$. If $Q_{j} = 0$ (live round) Bob generates some classical information $K_{j}$ (taking values in $\cK$). On the other hand, if $Q_{j} = 1$ (test round) Alice generates another random input $T_{j}$ and passes it to Bob who must produce an output $Y_{j}$.}
\label{table:random-variables}
\end{table}
Note that in this model the requirement of immediately producing the relevant classical information essentially replaces the need to restrict Bob's storage capabilities. The fact that success (or failure) can be assessed \emph{immediately} after every round makes such a model well-suited for a standard martingale-style analysis. It turns out that the only quantum component of such an analysis is the trade-off between the winning probabilities of the live round and the test round denoted by $p_{L}$ and $p_{T}$, respectively. Conveniently, we have already investigated this trade-off since both probabilities can be bounded through the absolute effective anticommutator $\varepsilon_{+}$. More specifically, since the probability of passing the test $p_{T}$ is related to the CHSH violation $\beta$ inequality~\eqref{eq:beta-absolute-effective-anticommutator} implies
\begin{equation}
\label{eq:pT}
p_{T}
\leq \frac{1}{2} + \frac{1}{4} \sqrt{ 1 + \sqrt{ 1 - \varepsilon_{+}^{2} } }.
\end{equation}
On the other hand, probability of winning the test round cannot exceed the optimal guessing probability of a classical adversary. Therefore, inequality~\eqref{eq:pguess-absolute-effective-anticommutator} implies
\begin{equation}
\label{eq:pL}
p_{L}
\leq \frac{1}{2} + \frac{1}{2} \sqrt{ \frac{ 1 + \varepsilon_{+} }{2} }.
\end{equation}
Combining inequalities~\eqref{eq:pT} and~\eqref{eq:pL} and treating $\varepsilon_{+}$ as a parameter taking values in $[0, 1]$ we determine the admissible pairs $(p_{L}, p_{T})$. The optimal trade-off is plotted in Fig.~\ref{fig:trade-off}.
\begin{figure}
	\centering
	\includegraphics[scale=1]{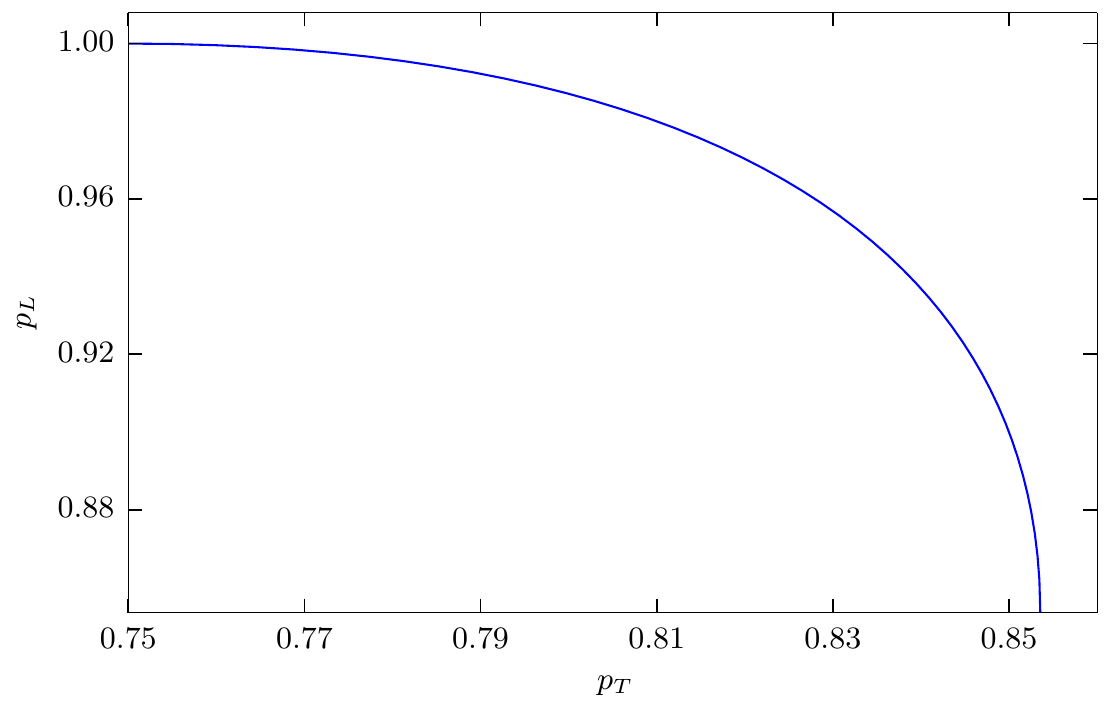}
	\caption{The (tight) trade-off between the winning probabilities of live and test rounds.}
	\label{fig:trade-off}
\end{figure}

The protocol takes three parameters: the probability of testing $q \in [0, 1]$, the CHSH threshold $\gamma \in [\frac{3}{4}, 1]$ and the number of rounds $n \in \amsbb{N}$. At the end of the protocol Alice calculates the fraction of successful CHSH rounds denoted by $\fchsh$. If $\fchsh < \gamma$ she aborts the protocol, otherwise she declares the execution correct. The security statement in this model is simply a bound on the probability that Alice believes the protocol has terminated correctly \emph{and} all the guesses of Bob are correct. We define the following random variables
\begin{gather*}
R_{l} := \sum_{j = 1}^{l} Q_{j} \nbox{(number of test rounds within the first $l$ rounds),}\\
S_{l} := \sum_{j = 1}^{l} ( X_{j} \oplus Y_{j} \oplus \Theta_{j} T_{j} \oplus 1) Q_{j} \nbox{(number of successful test rounds within the first $l$ rounds).}
\end{gather*}
Let $\cL := \{ j \in [n] : Q_{j} = 0 \}$ be the set of live rounds and for $j \in\cL$ let $G_{j}$ be the event corresponding to Bob guessing the outcome correctly, i.e.
\begin{equation}
G_{j} \iff X_{j} = f_{j}( K^{j}, \Theta^{j} ).
\end{equation}
Moreover, let $H_{l}$ be the event of guessing all the live rounds within the first $l$ rounds
\begin{equation}
H_{l} \iff \bigwedge_{j \in [l] \cap \cL} G_{j}.
\end{equation}
The failure event is defined as a conjunction of exceeding the CHSH threshold and Bob guessing all the live bits correctly
\begin{equation}
F \iff S_{n} \geq \gamma R_{n} \wedge H_{n}.
\end{equation}
Before we delve into the proof, let us show why finding an upper bound on $\Pr[F]$ is equivalent to proving security claim~\eqref{eq:security-general-2}. Let $P$ be the event of passing, i.e.~$P \iff S_{n} \geq \gamma R_{n}$ and let $p_{\textnormal{pass}} := \Pr[P]$. Writing
\begin{equation}
\Pr[ F ] = p_{\textnormal{pass}} \cdot \Pr[ H_{n} | P ]
\end{equation}
allows us to identify the last term with the sequential guessing probability conditioned on passing the test. Indeed, since
\begin{equation}
H_{n} \iff \bigwedge_{j \in \cL} X_{j} = f_{j}( K^{j}, \Theta^{j} )
\end{equation}
and assuming that Bob has chosen the optimal set of functions $\{ f_{j} \}_{j}$, we see that
\begin{equation}
\Pr[ H_{n} ] = \pguessseq( X^{\cL} | Y^{\cL} )
\end{equation}
with $Y_{j} = (K_{j}, \Theta_{j})$ being the $j$-th advice variable.

To improve clarity of the proof it is convenient to define a variable which evaluates the test threshold after $l$ rounds $X_{l} = S_{l} - \gamma R_{l}$. Note that the transition $l \to (l + 1)$ is governed by the following equation
\begin{equation}
\label{eq:1}
\begin{aligned}
\Pr[ X_{l + 1} \geq x \wedge H_{l + 1} ] &= \Pr[ X_{l} \geq x \wedge H_{l} ] \cdot (1 - q) p_{L} + \Pr[ X_{l} \geq x - (1 - \gamma) \wedge H_{l} ] \cdot q p_{T}\\
&+ \Pr[ X_{l} \geq x + \gamma \wedge H_{l} ] \cdot q (1 - p_{T}),
\end{aligned}
\end{equation}
where the three terms correspond to a successful live round, a successful test round and an unsuccessful test round, respectively. In the following proposition we establish a recursive upper bound on the probability of failure.
\begin{prop}
Let $k \geq 0$ be an arbitrary real constant. For all $l \in \amsbb{N}$ the following inequality holds
\begin{equation}
\label{eq:ansatz}
\Pr[ X_{l} \geq x \wedge H_{l} ] \leq [\alpha(q, \gamma, k)]^{l} e^{-kx},
\end{equation}
where $\alpha(q, \gamma, k)$ is a real constant defined as
\begin{equation}
\label{eq:alphak}
\alpha(q, \gamma, k) := \max_{ ( p_{L}, p_{T} ) } \big[ (1 - q) p_{L} + q e^{k (1 - \gamma)} p_{T} + q e^{- k \gamma} (1 - p_{T}) \big]
\end{equation}
and the maximisation is taken over all admissible pairs $(p_{L}, p_{T})$.
\end{prop}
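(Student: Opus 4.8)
The plan is to prove \eqref{eq:ansatz} by induction on $l$, feeding the inductive hypothesis into the one-step recursion \eqref{eq:1} and recognising the resulting coefficient as precisely the quantity maximised in the definition \eqref{eq:alphak} of $\alpha(q,\gamma,k)$. For the base case I would note that $X_0 = S_0 - \gamma R_0 = 0$ and that $H_0$ holds vacuously, so $\Pr[X_0 \geq x \wedge H_0] = \mathbb{1}[x \leq 0]$; since $k \geq 0$ we have $e^{-kx} \geq 1$ for $x \leq 0$ while the left-hand side vanishes for $x > 0$, so in both cases $\Pr[X_0 \geq x \wedge H_0] \leq e^{-kx}$, which is \eqref{eq:ansatz} at $l = 0$.

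For the inductive step I would assume \eqref{eq:ansatz} at level $l$ for every threshold. Because the three coefficients $(1-q)p_L$, $q p_T$ and $q(1-p_T)$ in \eqref{eq:1} are non-negative, substituting the inductive bound at the shifted thresholds $x$, $x - (1-\gamma)$ and $x + \gamma$ preserves the inequality and yields
\begin{align*}
\Pr[ X_{l+1} \geq x \wedge H_{l+1} ]
&\leq [\alpha]^{l}\Big( e^{-kx}(1-q)p_L + e^{-k(x-(1-\gamma))} q p_T + e^{-k(x+\gamma)} q(1-p_T) \Big) \\
&= [\alpha]^{l} e^{-kx}\Big( (1-q)p_L + q\, e^{k(1-\gamma)} p_T + q\, e^{-k\gamma}(1-p_T) \Big).
\end{align*}
By \eqref{eq:alphak} the parenthesised expression is at most $\alpha(q,\gamma,k)$ for every admissible pair $(p_L,p_T)$, so $\Pr[X_{l+1} \geq x \wedge H_{l+1}] \leq [\alpha(q,\gamma,k)]^{l+1} e^{-kx}$, which would close the induction.

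The one genuinely delicate point is that for devices with memory the winning probabilities in round $l+1$ are not fixed numbers: they are conditional quantities $p_L(\mathcal{H}_l)$ and $p_T(\mathcal{H}_l)$ depending on the whole history $\mathcal{H}_l$ of the first $l$ rounds, so \eqref{eq:1} must be read with these conditional values. The clean way to handle this is to observe that $M_l := [\alpha(q,\gamma,k)]^{-l}\, e^{k X_l}\, \mathbb{1}[H_l]$ is a non-negative supermartingale. Conditioning on $\mathcal{H}_l$, the round-$(l+1)$ coin $Q_{l+1}$ is independent of the past, and the conditional state left in the main device and in Bob's device determines a single absolute effective anticommutator $\varepsilon_+(\mathcal{H}_l)$ of the main-device observables $A_0, A_1$ — these are the same in test and live rounds precisely because Alice feeds the main device the identical input $\Theta_{l+1}$ in both cases. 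Hence the conditional guessing probability obeys \eqref{eq:pL} and the conditional CHSH-winning probability obeys \eqref{eq:pT} with this \emph{common} $\varepsilon_+$, so $(p_L(\mathcal{H}_l), p_T(\mathcal{H}_l))$ is admissible. Computing $\mathbb{E}[M_{l+1} \mid \mathcal{H}_l]$ would reproduce exactly the bracket of \eqref{eq:alphak}, which is bounded by $\alpha$, giving $\mathbb{E}[M_{l+1}\mid\mathcal{H}_l] \leq M_l$; therefore $\mathbb{E}[M_l] \leq \mathbb{E}[M_0] = 1$. Finally, on the event $\{X_l \geq x \wedge H_l\}$ one has $e^{kX_l}\mathbb{1}[H_l] \geq e^{kx}$ (using $k \geq 0$), so $\mathbb{1}[X_l \geq x \wedge H_l] \leq e^{-kx}[\alpha]^l M_l$, and taking expectations recovers \eqref{eq:ansatz}.

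I expect the main obstacle to be exactly this quantum input — verifying that conditioning on an arbitrary (and possibly strongly success-biased) history still leaves a well-defined pair $(p_L,p_T)$ governed by a single $\varepsilon_+$ and therefore lying in the admissible region traced out by \eqref{eq:pT} and \eqref{eq:pL}. The remaining combinatorics, the choice of the exponential ansatz $e^{-kx}$, and the final Markov step are routine once round-by-round admissibility is in hand.
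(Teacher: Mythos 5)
Your proof is correct and its core is the same as the paper's: the paper's entire argument is ``proof by induction; the base case is trivial and the inductive step follows from applying the ansatz to Eq.~\eqref{eq:1}'', which is exactly your first two paragraphs. Your third paragraph, however, adds something the paper's one-line proof glosses over: Eq.~\eqref{eq:1} is written with fixed numbers $p_L, p_T$, whereas for devices with memory these are history-dependent conditional probabilities (and, strictly speaking, could differ across the three conditioning events $\{X_l \geq x\}$, $\{X_l \geq x-(1-\gamma)\}$, $\{X_l \geq x+\gamma\}$ as written). Your reformulation via the supermartingale $M_l = [\alpha(q,\gamma,k)]^{-l} e^{kX_l}\mathbb{1}[H_l]$, conditioning on the full history $\mathcal{H}_l$ and invoking the admissibility of the conditional pair $\big(p_L(\mathcal{H}_l), p_T(\mathcal{H}_l)\big)$ through the common conditional $\varepsilon_+$ of Propositions~\ref{prop:chsh-anticommutator} and~\ref{prop:uncertainty-anticommutator}, is the clean way to make the recursion rigorous, and the final Markov step recovers \eqref{eq:ansatz}. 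So: same route as the paper, but your version is the more careful one; the only point you should still spell out (and which you correctly flag as the crux) is that Bob's round-$(l+1)$ measurement producing $K_{l+1}$ may act on a system correlated with the main device's half of the state, so the classical-side-information bound \eqref{eq:pL} — rather than a naive single-system bound — is what guarantees conditional admissibility.
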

\begin{proof}
Proof by induction. The statement is trivial for $l = 0$ and the second induction step follows directly from applying the ansatz \eqref{eq:ansatz} to Eq.~\eqref{eq:1} (and we directly obtain the form of $\alpha(q, \gamma, k)$ given in Eq.~\eqref{eq:alphak}).
\end{proof}
\noindent As an immediate corollary (set $x = 0$) we get a bound on the desired probability
\begin{equation}
\Pr[F] = \Pr[ S_{n} \geq \gamma R_{n} \wedge H_{n} ] = \Pr[ X_{n} \geq 0 \wedge H_{n} ] \leq [ \alpha(q, \gamma, k) ]^{n}.
\end{equation}
Since this holds for any $k \geq 0$, we choose the tightest bound
\begin{equation}
\alpha_{\textnormal{min}}(q, \gamma) := \min_{k \geq 0} \alpha(q, \gamma, k),
\end{equation}
which leads to the final bound
\begin{equation}
\Pr[ F ] \leq [\alpha_{\textnormal{min}}(q, \gamma)]^{n}.
\end{equation}
While we do not know how to find $\alpha_{\textnormal{min}}(q, \gamma)$ analytically, numerical evaluation is straightforward as explained in Appendix~\ref{app:numerical-evaluation}. Some numerical results are plotted in Fig.~\ref{fig:alphamin}.
\section{Conclusions}
We have proposed a protocol implementing DI WSE and proved security in two scenarios. In the memoryless scenario the device is first extensively tested which allows to estimate the incompatibility between the two measurements. This turns out to be sufficient to show a lower bound on the min-entropy of the output (against a classical adversary), which happens to be tight. Due to the SDP formulation of the min-entropy we can show that the lower bound is additive when multiple rounds are played (which is not obvious since Bob's attack could introduce correlations between different rounds). Moreover, we have considered a model in which the devices used by Alice might have memory but Bob is restricted to sequential attacks. In this case a martingale-style approach leads to an explicit security statement.

A secure implementation of WSE leads directly to bit commitment since the reduction involves classical post-processing only (which is trusted even in the device-independent setting). To turn WSE into some arbitrary universal functionality (e.g.~oblivious transfer) one needs to add trusted quantum communication or a secure (quantum proof) implementation of another cryptographic primitive called interactive hashing (for explicit security bounds for such constructions see Sections IV and V of Ref.~\cite{konig12}). Alternatively, one can use our techniques to directly prove security of an oblivious transfer protocol (in the bounded storage model) proposed in Ref.~\cite{damgard07}.

While this work constitutes a significant progress in the field of DI two-party cryptography, many open questions remain. In the memoryless case we only obtain bounds on the min-entropy, while it is often advantageous to derive bounds on other R\'{e}nyi entropies. The problematic step in this case is the additivity of lower bounds if multiple rounds are played. In case of the min-entropy additivity is a direct consequence of the SDP formulation (the same observation holds for the collision entropy corresponding to the pretty good measurement) but we do not know if additivity holds in general. While this problem might seem purely technical, it is of practical relevance as it would lead to significantly better security guarantees.

Another important open question is the analysis of devices with memory. In our analysis we have assumed that Bob's attack is sequential. Unfortunately, we know that sequential attacks are not always optimal (even if Alice's behaviour is sequential, see Appendix~\ref{app:sequential-guessing-not-optimal} for a simple counterexample). A security proof for devices with memory in this scenario is arguably the most important open question related to DI two-party cryptography.

Finally, we note that in the realm of the noisy-storage model there are much more sophisticated analyses~\cite{dupuis15}, which do not rely on the fact that we will first bound the adversary's information about the string $X^n$ when he is holding classical information, and subsequently relate this to his information about $X^n$ \emph{including} quantum information. Instead, one establishes a direct link between the adversary's quantum information and his uncertainty about $X^n$~\cite{dupuis15}. It remains an interesting
open question whether these techniques can be applied in the DI setting.
\section*{Acknowledgements}
We thank Rotem Arnon-Friedman, Matthias Christandl, Yfke Dulek, David Elkouss, Serge Fehr, Fabian Furrer, Iordanis Kerenidis, Le Phuc Thinh, Christian Schaffner, Florian Speelman, Marco Tomamichel, Gonzalo de la Torre and Thomas Vidick for interesting discussions. JK is supported by Ministry of Education, Singapore and the European Research Council (ERC Grant Agreement 337603). SW is funded by STW, NWO VIDI and an ERC Starting Grant.
\appendix
\section{Effective anticommutation is not sufficient against classical side information}
\label{app:effective-anticommutation-insufficient}
Let $\{ \ket{j} \}_{j = 0}^{3} \}$ be a basis for a $4$-dimensional Hilbert space. Consider the state
\begin{equation*}
\rho_{A} = \frac{1}{2} \big( \ketbraq{0} + \ketbraq{2} \big)
\end{equation*}
and binary observables
\begin{gather*}
A_{0} = \ketbraq{0} - \ketbraq{1} + \ketbraq{2} - \ketbraq{3},\\
A_{1} = \ketbraq{0} - \ketbraq{1} - \ketbraq{2} + \ketbraq{3}.
\end{gather*}
It is easy to check that the anticommutator equals
\begin{equation*}
\{ A_{0}, A_{1} \} = 2 \big( \ketbraq{0} + \ketbraq{1} - \ketbraq{2} - \ketbraq{3} \big),
\end{equation*}
which implies that the effective anticommutator equals $\frac{1}{2} \tr ( \{A_{0}, A_{1} \} \rho_{A} ) = 0$. While observable $A_{0}$ leads to no uncertainty, it is easy to verify that if we measure observable $A_{1}$ we obtain a uniform outcome. Indeed, it is possible to show non-trivial lower bound on $H_{\alpha}(X | \Theta)$.

Now, suppose that somebody holds an extra bit of classical information about the system. More specifically, we consider
\begin{equation*}
\rho_{AK} = \frac{1}{2} \big( \ketbraq{0}_{A} \otimes \ketbraq{0}_{K} + \ketbraq{2}_{A} \otimes \ketbraq{1}_{K} \big).
\end{equation*}
Measuring observable $A_{1}$ now leads to an outcome which is still uniform but it is perfectly correlated with the classical register $K$. Therefore, $H_{\alpha}(X | K \Theta) = 0$, which demonstrates that effective anticommutation does not imply uncertainty against classical side information. This is consistent with evaluating the absolute effective anticommutator $\frac{1}{2} \tr ( \abs{ \{A_{0}, A_{1} \} } \rho_{A} ) = 1$, which does not yield a non-trivial uncertainty bound.
\section{Numerical evaluation of $\alpha_{\textnormal{min}}(q, \gamma)$}
\label{app:numerical-evaluation}
Recall that our goal is to evaluate
\begin{equation*}
\alpha_{\textnormal{min}}(q, \gamma) := \min_{k \geq 0} \max_{ ( p_{L}, p_{T} ) } \big[ (1 - q) p_{L} + q e^{k (1 - \gamma)} p_{T} + q e^{- k \gamma} (1 - p_{T}) \big].
\end{equation*}
We first show that the maximisation over the admissible pairs $(p_{L}, p_{T})$ can be performed analytically. Since the expression inside the square bracket is increasing in both $p_{L}$ and $p_{T}$ the optimal point lies at the boundary, which can be parametrised by the effective absolute anticommutator $t = \frac{1}{2} \tr \big( \abs{ \{A_{0}, A_{1}\} } \, \rho_{A} \big) \in [0, 1]$. Let us restate Eqs.~\eqref{eq:pL} and \eqref{eq:pT}
\begin{align*}
p_{L}(t) &= \frac{1}{2} + \frac{1}{2 \sqrt{2}} \sqrt{ 1 + t },\\
p_{T}(t) &= \frac{1}{2} + \frac{1}{4} \sqrt{ 1 + \sqrt{ 1 - t^{2} } } = \frac{1}{2} + \frac{1}{4 \sqrt{2}} \big( \sqrt{ 1 + t } + \sqrt{ 1 - t } \big).
\end{align*}
Solving the maximisation problem corresponds to calculating $g(k) := \max_{t \in [0, 1]} f_{k}(t)$ for
\begin{equation*}
f_{k}(t) := A \sqrt{1 + t} + B \sqrt{1 - t} + C
\end{equation*}
with
\begin{align*}
A &= \frac{1}{4 \sqrt{2}} \big[ 2 (1 - q) + q e^{-k \gamma} (e^{k} - 1) \big],\\
B &= \frac{q e^{-k \gamma} (e^{k} - 1)}{4 \sqrt{2}},\\
C &= \frac{1 - q}{2} + \frac{q e^{-k \gamma} (e^{k} + 1)}{2}.
\end{align*}
The first two terms can be written as an inner product $\ave{u, v}$ for $u = (A, B)$ and $v = (\sqrt{1 + t}, \sqrt{1 - t})$. Applying the Cauchy-Schwarz inequality leads to the following upper bound
\begin{equation}
g(k) \leq \sqrt{2 (A^{2} + B^{2}) } + C,
\end{equation}
which can be achieved by choosing $t = (A^{2} - B^{2})/(A^{2} + B^{2})$.

We do not know how to minimise $g(k)$ over $k \geq 0$ analytically but numerically it is an easy task because for all valid $(q, \gamma)$ we have $g(0) = 1$ and $\lim_{k \to \infty} g(k) = \infty$ and the function is convex. Therefore, there is a unique minimum which corresponds precisely to $\alpha_{\textnormal{min}}(q, \gamma)$.

There are three cases in which we should not be able to prove security:
\begin{itemize}
\item Alice never tests: $q = 0$.
\item Alice always tests: $q = 1$.
\item The threshold is classical: $\gamma = \frac{3}{4}$.
\end{itemize}
Here, we show that in all other cases we get $\alpha_{\textnormal{min}}(q, \gamma) < 1$. Assuming that $q \neq 1$ (if $q = 1$ there is no security possible anyway) we find the Taylor expansion around $k = 0$
\begin{align*}
\sqrt{2 (A^{2} + B^{2} ) } &= \frac{1 - q}{2} + \frac{q}{4} k + O(k^{2}),\\
C &= \frac{1 + q}{2} + \frac{q ( 1 - 2 \gamma ) }{2} k + O(k^{2})
\end{align*}
and therefore
\begin{equation*}
g(k) = 1 + \Big( \frac{3}{4} - \gamma \Big) q k + O(k^{2}).
\end{equation*}
This shows that whenever $q > 0$ and $\gamma > \frac{3}{4}$ setting $k$ small enough leads to $g(k) < 1$, which concludes the argument.
\section{Sequential guessing is not necessarily optimal}
\label{app:sequential-guessing-not-optimal}
Consider a device which can be used twice and recall that we use $\Theta_{k}$ and $X_{k}$ to denote the input and output in the $k$-th round, respectively. Alice's subsystem consists of two qubits while Bob's subsystem is a qudit. The initial state is
\begin{equation*}
\rho_{A_{1} A_{2} B} = \frac{1}{2} \big( \ketbraq{0}_{A_{1}} \otimes \ketbraq{0}_{A_{2}} + \ketbraq{1}_{A_{1}} \otimes \ketbraq{1}_{A_{2}} \big) \otimes \rho_{B}
\end{equation*}
for some fixed $\rho_{B}$. Since Bob's state is uncorrelated, it carries no useful information so we can ignore it and assume that Bob picks his ``guessing functions'' deterministically. In the first round the device of Alice performs the computational basis measurement on the first qubit regardless of the value of $\Theta_{1}$. Therefore, the state after the first round is
\begin{equation*}
\rho_{X_{1} A_{2}} = \frac{1}{2} \big( \ketbraq{0}_{X_{1}} \otimes \ketbraq{0}_{A_{2}} + \ketbraq{1}_{X_{1}} \otimes \ketbraq{1}_{A_{2}} \big).
\end{equation*}
In the second round the device performs a projective measurement on the second qubit but the basis depends on $\Theta_{2}$: if $\Theta_{2} = X_{1}$ the qubit is measured in the computational basis (which ensures that $X_{1} = X_{2}$, while if $\Theta_{2} \neq X_{1}$ the qubit is measured in the Hadamard basis (which leads to$X_{1}$ and $X_{2}$ being uncorrelated). It is easy to verify that the resulting probability distribution $P_{X_{1} X_{2} \Theta_{2}}$ (we have ignored $\Theta_{1}$ since it is uncorrelated from the other random variables) is
\begin{center}
\begin{tabular}{c | c | c | c}
$\Theta_{2}$ & $X_{1}$ & $X_{2}$ & $\Pr$\\
\hline
0 & 0 & 0 & 1/4\\
0 & 1 & 0 & 1/8\\
0 & 1 & 1 & 1/8\\
1 & 0 & 0 & 1/8\\
1 & 0 & 1 & 1/8\\
1 & 1 & 1 & 1/4\\
\end{tabular}
\end{center}
In the general scenario it is optimal for Bob to guess $X_{1} = \Theta_{2}$ and $X_{2} = \Theta_{2}$ which succeeds with probability $\frac{1}{2}$.

However, in the sequential scenario Bob must attempt to guess $X_{1}$ before he learns $\Theta_{2}$. It is easy to verify that in this case his guessing probability is at most $\frac{3}{8}$.
\bibliographystyle{alphaarxiv}
\bibliography{/home/jedrek/Projekty/tex/librarysan}
\end{document}